
\documentclass[a4paper,english,12pt]{article}

\usepackage{babel}
\usepackage{latexsym}
\usepackage{amsmath}
\usepackage{amsfonts}
\usepackage{amsthm}

\usepackage{graphicx}                 
\usepackage{hyperref}

\usepackage[authoryear]{natbib}
 

\newtheorem{theorem}{Theorem}
\newtheorem{lemma}{Lemma}

\newtheorem{proposition}{Proposition}

\newtheorem{assumption}{Assumption}

\newcommand{\nc}{\newcommand}
\nc{\vekk}[1]{}

\nc{\lik}{ = }  \nc{\supp}{\mbox{supp}} \nc{\E}{\mbox{E}}

\nc{\ttau}{\tilde{\tau}} \nc{\ttheta}{\tilde{\theta}}

\nc{\fra}{From} \nc{\st}{|}
\newcommand{\bT}{{\mathbf{T}}}

\newcommand{\bt}{{\mathbf{t}}}

\newcommand{\bU}{{\mathbf{U}}}
\newcommand{\bu}{{\mathbf{u}}}
\newcommand{\bv}{{\mathbf{v}}}
\newcommand{\bX}{{\mathbf{X}}}
\newcommand{\bx}{{\mathbf{x}}}

\newcommand{\cU}{{\mathcal{U}}}
\newcommand{\cX}{{\mathcal{X}}}

\newcommand{\RealN}{{\mbox{${\mathbb R}$}}} 

\newcommand{\btheta}{\mbox{\boldmath$\theta$}}

\nc{\beq}{\begin{equation}}

\nc{\eeq}{\end{equation}}

\nc{\K}{K}

\nc{\beqns}{\begin{eqnarray*}}
\nc{\eeqns}{\end{eqnarray*}}
\nc{\beqn}{\begin{eqnarray}}
\nc{\eeqn}{\end{eqnarray}}

\nc{\hatt}{\hat{\theta}}

\nc{\beit}{\begin{itemize}}
\nc{\eit}{\end{itemize}}

\begin{document}

\baselineskip=1.5pc


\title{Conditional Monte Carlo revisited}

\author{Bo H. Lindqvist\thanks{bo.lindqvist@ntnu.no}, Rasmus Erlemann\thanks{rasmus.erlemann@ntnu.no}, Gunnar Taraldsen\thanks{gunnar.taraldsen@ntnu.no} \\ Department
of Mathematical Sciences \\ Norwegian University of Science and Technology \\ Trondheim, Norway}

  \date{}

\maketitle

\begin{abstract}
\noindent Conditional Monte Carlo refers to  sampling from the conditional distribution of a random vector $\bX$ given the value $T(\bX) =\bt$ for a 
function $T(\bX)$.  Classical conditional Monte Carlo methods were designed for estimating  conditional expectations of functions $\phi(\bX)$
by sampling from unconditional distributions obtained by certain weighting schemes. The basic ingredients were the use of importance sampling and change of variables. In the present paper we reformulate the problem by introducing an artificial parametric model, representing the conditional distribution of $\bX$ given $T(\bX)=\bt$ within this new model. The key is to provide the parameter of the artificial model by a distribution. The  approach is illustrated by several examples, which are particularly chosen to illustrate conditional sampling in cases where such sampling is not straightforward. A simulation study and an application to goodness-of-fit testing of real data are also given. 
\end{abstract}

\bigskip

\noindent%
\textbf{Keywords} --  change of variable, conditional distribution, exponential family, goodness-of-fit testing, Monte Carlo simulation,   sufficiency  
\vfill

\newpage

\section{Introduction}

 Suppose we want to sample from the conditional distribution of a random vector  $\bX$ given $T(\bX)=\bt$ for a function $T(\bX)$ of $\bX$.
\cite{tukey} presented an interesting technique which they 
named \textit{conditional Monte Carlo.} Their idea was to
determine a weight $w_\bt (\bX)$ and a modified sample $\bX_\bt$ such that
$\E[\phi(\bX) \st T(\bX) \lik \bt]  \lik  \E [\phi(\bX_\bt) w_\bt (\bX)]$ 
for any function $\phi$, 
thus replacing conditional expectations by ordinary expectations
and allowing Monte Carlo computation. 

 Although the authors were aware that the
method had generalizations, they confined themselves to rather special cases. \cite{hammersley} used their idea in a slightly more general and flexible analytic setting, see also Chapter 6 of the monograph by \cite{hambook}. \cite{wendel} gave an alternative explanation, wherein the group-theoretic aspect of the problem played the dominant role. 
Later, \cite{dubi} gave an explanation of conditional Monte Carlo in terms of importance sampling and change of variables. Their approach provides a  framework by which in principle any
conditional sampling problem can be handled, 
and is the survivor in textbooks \citep{RIPLEY,EVANS}. 
Conditional Monte Carlo, in the form as introduced in the 1950s and the following nearest decades, has apparently received
little attention in the later literature and has seemingly remained theoretically underdeveloped. An interesting recent reference is \cite{feng} who exploit the change of variables framework of conditional Monte Carlo with application to  sensitivity estimation for financial options.

Sampling from conditional distributions has been of particular interest in statistical inference problems involving sufficient statistics \citep{lehmann,lehmanncasella}. Typical applications are in construction of optimal estimators, nuisance parameter elimination and goodness-of-fit testing. 
In some special cases one is able to derive conditional distributions analytically. Typically this is not possible, however, thus 
leading to the need for approximations or simulation algorithms.

\cite{engenlil} considered the general problem of Monte Carlo computation of conditional expectations given a 
sufficient statistic. Their approach was further studied and generalized by \cite{LT05}, see also  \cite{counter} and \cite{doksum}. Further applications of the technique can be found in \cite{schweder}, pp. 239, 250. Consider a statistical model where a random vector $\bX$ has a distribution indexed by the parameter $\theta$, and suppose the statistic $\bT$ is sufficient for $\theta$.
\vekk{ 
 The basic assumption is that there is given a random vector $\bU$  with a known distribution, such that $(\bX,\bT)$ for a
given parameter value $\theta$, say, can be simulated by means of $\bU$. More precisely, it is assumed that there exist functions $\chi$ and $\tau$ such 
that, for each $\theta$, the joint distribution of $(\chi (\bU, \theta), \tau (\bU, \theta))$  equals the joint distribution of 
$(\bX,\bT)$ under~$\theta$. 
} 
 The basic assumption is that there is given a random vector $\bU$  with a known distribution, such that $(\bX,\bT)$ for a
given parameter value $\theta$, say, can be simulated by $(\chi (\bU, \theta), \tau (\bU, \theta))$ for given  functions $\chi$ and $\tau$.  
Let $\bt$ be the observed value of $\bT$, and suppose that a sample from the conditional distribution of $\bX$ given $\bT \lik \bt$ is 
wanted. Since the conditional distribution by sufficiency does not depend on $\theta$, it seems reasonable that it can be described in some simple way in terms of the distribution of $\bU$, and thus enabling Monte Carlo simulation based on 
$\bU$. The main idea of \cite{engenlil} was to first draw $\bU=\bu$ from its known distribution, then to determine a parameter 
value $\hatt$ such that $\tau(\bu,\hatt)=t$ and finally to use $\chi(\bu,\hatt)$ as the desired sample. In this way one 
indeed gets a sample of $\bX$ with the corresponding $\bT$ having the correct value $\bt$. However, as shown by \cite{LT05}, only under a so-called pivotal condition will this be a sample from the true conditional distribution.  The clue \citep{LT05} is to let the parameter $\theta$ be given a suitable distribution, changing it to a random vector $\Theta$, independent of $\bU$, and showing that the conditional distribution of $\bX$ given $\bT=\bt$ equals the conditional distribution of $\chi(\bU,\Theta)$ given $\tau(\bU,\Theta) = \bt$.

In the present paper, motivated by the classical approaches of conditional Monte Carlo, we construct a method for sampling from conditional distributions of $\bX$ given $\bT \equiv T(\bX)= \bt$  in general, without reference to a particular statistical model and sufficiency. As was suggested in \cite{LT05}, this could in principle be done by embedding the pair $(\bX,\bT)$ in an artificial parametric model where $\bT$ is a sufficient statistic, and proceed as above. This may, however, often not be a simple task, if practically doable at all. While the new method is also based on the construction of an artificial parametric statistical model, sufficiency of $\bT$ is not part of this construction.  As will be demonstrated in examples, algorithms derived from the present approach will often be more attractive than the ones based on the sufficency approach as described above.  

The main idea of the new method is to construct an artificial statistical model for a random vector $\bU$ with distribution depending on a parameter $\theta$, such that a ``pivot'' $\chi(\bU,\theta)$ has the same distribution as $\bX$ for each $\theta$. Moreover, defining $\tau(\bU,\theta)=T(\chi(\bU,\theta))$, and  considering $\theta$ as the realization of a random $\Theta$, it will follow that the pair $(\chi(U,\Theta),\tau(U,\Theta))$ has the same distribution as $(\bX,\bT)$. Consequently, the conditional distribution of $\bX$ given $\bT=\bt$ equals the conditional distribution of $\chi(\bU,\Theta)$ given $\tau(\bU,\Theta) = \bt$. This end result similar to what was described above for the approach of \cite{LT05}, but a crucial difference from the latter approach  is that the $\bU$ and $\Theta$ are no longer independent.

As indicated above, an advantage of the new approach is that it applies to a single distribution for $\bX$ instead of a parametric model. Thus, when applied to conditional sampling given a sufficient statistic, the method may be based on the original model only under a conveniently chosen single parameter value, for example using a standard exponential distribution when the model is a two-parameter gamma model as in Section~\ref{421}. 

We give several examples to demonstrate the approach and illustrate different aspects of the theoretical derivations. In particular, the examples include a new method for sampling of uniformly distributed random variables conditional on their sum, where the method of embedding the distribution into a parametric family and using sufficiency is much less attractive than the new method. Other examples consider conditional sampling given sufficent statistics in the gamma and inverse Gaussian models, as well as a new treatment of a classical example from \cite{tukey}. 

The recent literature contains several other approaches to conditional sampling. For example, \cite{lockhartgibbs} and \cite{lockhartvonmises} studied the use of Gibbs sampling to generate samples from the conditional distribution given the minimal sufficient statistic for the gamma distribution and the von Mises distribution, respectively. \cite{medor} and \cite{ormed} constructed corresponding sampling methods based on the  Rao-Blackwell theorem, while \cite{santos} suggested a method using the Metropolis-Hastings algorithm. An older reference for conditional sampling in the inverse Gaussian distribution is \cite{chengIG}.                                             

The present paper is structured as follows. In Section~\ref{sec2} we explain the main method and prove the basic results underpinning the approach. Specific methods for simulation and computation within the approach are also briefly described. Some further explanations and theoretical extensions are given in Section~\ref{sec3}. Section~\ref{sec4} is devoted to examples, in particular for a general two-parameter exponential family of positive variables. Some simulation results which indicate the correctness of the methods are given in Section~\ref{sec5}, while an example of goodness-of-fit testing with real data is given in Section~\ref{sec6}. Some final remarks are given in Section~\ref{sec7}. The paper is concluded by an Appendix containing two lemmas referred to earlier in the paper.

\section{The main method}
\label{sec2}
Let $\bX$ be a random vector  and let $\bT=T(\bX)$ be a function of $\bX$. Our aim is to sample from the conditional distribution of $\bX$ given $\bT=\bt$. As indicated in the Introduction, the idea is to construct a pair $(\bU,\Theta)$ of random vectors and functions $\chi(\bU,\Theta)$ and $\tau(\bU,\Theta)$ such that  this conditional distribution equals the one of $\chi(\bU,\Theta)$ given $\tau(\bU,\Theta)=t$. 

Let $\bU$ be a random vector with values in $\cU$ and  distribution $P_\theta$ depending on a parameter $\theta \in \Omega$. Assume that there is a function $\chi(\bu,\theta)$ defined for $\bu \in \cU$, $\theta \in \Omega$, such that
\beq
\label{pivot}
\chi(\bU,\theta) \sim \bX \mbox{ for each $\theta \in \Omega$}.
\eeq
  Here '$\sim$' means 'having the same distribution as', and $\bU$ in (\ref{pivot}) is assumed to have the distibution $P_\theta$.      
Note that  $\chi(\bU,\theta)$ is then a \textit{pivot} in the statistical model defined by $\bU$ and $P_\theta$.  

The following result is basic to our approach. Let notation and assumptions be as above and let $\tau(\bu,\theta)= T(\chi(\bu,\theta))$ for $\bu \in \cU$ and $\theta \in \Omega$.
\begin{theorem}
\label{newth}
Let $\Theta$ be a random vector taking values in $\Omega$ and let $\bU$ conditional on $\Theta=\theta$ be distributed as $P_\theta$. If $\chi$ satisfies (\ref{pivot}), then  the conditional distribution of $\bX$ given $\bT=t$ is equal to the conditional distribution of $\chi(\bU,\Theta)$ given $\tau(\bU,\Theta)=t$.   
  \end{theorem}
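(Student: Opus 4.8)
The plan is to show that the pair $(\chi(\bU,\Theta),\tau(\bU,\Theta))$ has exactly the same joint distribution as $(\bX,\bT)$, and then to invoke the elementary fact that two random pairs with the same joint law also have, for almost every value of the conditioning variable, the same regular conditional distribution of the first component given the second. Once the joint laws are shown to agree, the asserted equality of conditional distributions is immediate.

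To compare the joint laws, note first that $\tau(\bU,\Theta)=T(\chi(\bU,\Theta))$ by definition and $\bT=T(\bX)$, so $(\chi(\bU,\Theta),\tau(\bU,\Theta))$ and $(\bX,\bT)$ are the images of $\chi(\bU,\Theta)$ and $\bX$, respectively, under the single map $y\mapsto(y,T(y))$. It therefore suffices to prove the marginal statement $\chi(\bU,\Theta)\sim\bX$. For this I would condition on $\Theta$: for a measurable set $A$,
\beq
P(\chi(\bU,\Theta)\in A)=\int_\Omega P\big(\chi(\bU,\theta)\in A \,\st\, \Theta=\theta\big)\, dP_\Theta(\theta),
\eeq
and since by hypothesis the conditional law of $\bU$ given $\Theta=\theta$ is $P_\theta$, the integrand equals $P_\theta\big(\chi(\bU,\theta)\in A\big)$, which by the pivotal property (\ref{pivot}) equals $P(\bX\in A)$ for \emph{every} $\theta\in\Omega$. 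Integrating this constant against the distribution of $\Theta$ returns $P(\bX\in A)$, hence $\chi(\bU,\Theta)\sim\bX$, and so the joint laws of $(\chi(\bU,\Theta),\tau(\bU,\Theta))$ and $(\bX,\bT)$ coincide.

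The only real care needed is measure-theoretic: one should work with regular conditional distributions throughout, check that $(\bU,\Theta)$ with the prescribed marginal/conditional structure is well defined and that $\chi$, $\tau$, $T$ are measurable, and state the conclusion for $P_\bT$-almost every $t$ (equivalently, for almost every $t$ under the common law of $\tau(\bU,\Theta)$). I expect this bookkeeping to be the main — and only mildly delicate — obstacle; the probabilistic content is simply ``equal joints give equal conditionals.'' It is worth emphasizing that no integrability or support restriction on the distribution of $\Theta$ is required, precisely because $P(\bX\in A)$ leaves the inner integral free of $\theta$; this freedom in the choice of $\Theta$ is exactly what the sampling algorithms developed later will exploit.
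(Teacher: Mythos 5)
Your proof is correct and takes essentially the same route as the paper's: both reduce the claim to showing $\chi(\bU,\Theta)\sim\bX$ (equality of the pairs then following because both are images under $y\mapsto(y,T(y))$), and both establish this by integrating the $\theta$-free identity supplied by (\ref{pivot}) over the distribution of $\Theta$. The paper phrases the last step with bounded test functions and the tower property rather than with indicator sets, but the argument is identical.
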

 \begin{proof} It is enough to prove that $\chi(\bU,\Theta) \sim \bX$. Then it will follow that $(\chi(\bU,\Theta),\tau(\bU,\Theta)) \sim (\bX,\bT)$ and the result of the theorem will follow.   Now, by (\ref{pivot}), for any bounded function $\phi$,
 \[
 \E[\phi(\chi(\bU,\Theta))] = \E \left[ \E [\phi(\chi(\bU,\Theta))|\Theta \right]
 = \E[\phi(\bX)].
 \]
 Since this holds for all $\phi$, we conclude that $\chi(\bU,\Theta) \sim \bX$.
  \end{proof}  
The following result shows how $\bU$ and $P_\theta$ can be constructed from a function $\chi(\bu,\theta)$. 

\begin{proposition}
\label{thm1}
Let $\bX$ be a random vector with density $f_\bX(\bx)$ and support $\cX$. 
Let further $\chi(\bu,\theta)$ for $\bu \in \cU$, $\theta \in \Omega$ be such that $\chi(\bu,\theta)$ for each fixed $\theta \in \Omega$ has a range that contains  $\cX$, is differentiable, and is one-to-one with a continuous inverse. Let $\bU$ be a random vector taking values in $ \cU$, with distribution depending on $\theta \in \Omega$ and given by the density 
\beq
\label{fu}
  f(\bu\;|\;\theta) = f_\bX(\chi(\bu,\theta)) \left|\det \partial_\bu \chi(\bu,\theta) \right|.
  \eeq
 Then 
 (\ref{pivot}) holds.
\end{proposition}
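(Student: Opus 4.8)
The plan is to fix an arbitrary $\theta \in \Omega$, abbreviate $g_\theta(\bu) = \chi(\bu,\theta)$, and show that the push-forward under $g_\theta$ of the probability measure with density $f(\cdot\,|\,\theta)$ is exactly the distribution of $\bX$; since $\theta$ is arbitrary, this is precisely condition (\ref{pivot}). By hypothesis $g_\theta$ is differentiable and one-to-one on $\cU$ with continuous inverse, hence a homeomorphism of $\cU$ onto its range, with $g_\theta(\cU) \supseteq \cX$. (Implicitly $\cU$ is a subset of a Euclidean space of the same dimension as the range of $\bX$, so that $\partial_\bu \chi(\bu,\theta)$ is a square matrix and the determinant in (\ref{fu}) makes sense.)

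Concretely, I would take an arbitrary bounded measurable $\phi$ and use the definition (\ref{fu}) to write
\[
\E\!\left[\phi(\chi(\bU,\theta))\right] = \int_{\cU} \phi\!\left(g_\theta(\bu)\right) f_\bX\!\left(g_\theta(\bu)\right) \left|\det \partial_\bu \chi(\bu,\theta)\right| d\bu .
\]
Then I would apply the change of variables $\bx = g_\theta(\bu)$, under which $\left|\det \partial_\bu \chi(\bu,\theta)\right| d\bu$ becomes $d\bx$ and $\cU$ is carried onto $g_\theta(\cU)$, which gives
\[
\E\!\left[\phi(\chi(\bU,\theta))\right] = \int_{g_\theta(\cU)} \phi(\bx) f_\bX(\bx)\, d\bx = \int_{\cX} \phi(\bx) f_\bX(\bx)\, d\bx = \E\!\left[\phi(\bX)\right],
\]
the second equality because $f_\bX$ vanishes outside $\cX = \supp f_\bX \subseteq g_\theta(\cU)$. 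Since this holds for every bounded measurable $\phi$, we get $\chi(\bU,\theta) \sim \bX$, and letting $\theta$ vary over $\Omega$ yields (\ref{pivot}). Taking $\phi \equiv 1$ in passing also confirms that $f(\cdot\,|\,\theta)$ integrates to one, so (\ref{fu}) indeed defines a density, making the construction internally consistent.

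The only genuine obstacle is the rigorous justification of the change-of-variables step, since $g_\theta$ is assumed merely differentiable with continuous inverse, not a $C^1$-diffeomorphism. I would handle this by invoking the version of the multivariate change-of-variables theorem valid for injective differentiable maps, namely $\int_{\cU} h(g_\theta(\bu)) \left|\det \partial_\bu \chi(\bu,\theta)\right| d\bu = \int_{g_\theta(\cU)} h(\bx)\, d\bx$ for every nonnegative measurable $h$; the critical set $\{\det \partial_\bu \chi(\bu,\theta) = 0\}$ contributes nothing on either side, its image being Lebesgue-null by a standard covering argument and it carrying no $\bU$-mass since $f(\bu\,|\,\theta) = 0$ there. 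Measurability of $g_\theta(\cU)$ and of the relevant preimages is automatic from continuity and injectivity of $g_\theta$, and one may equivalently restrict attention from the outset to $g_\theta^{-1}(\cX)$. These are routine measure-theoretic matters; the substance of the proposition is the Jacobian cancellation displayed above.
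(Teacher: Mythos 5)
Your proof is correct and follows essentially the same route as the paper's: fix $\theta$, take an arbitrary bounded $\phi$, and apply the multivariate change of variables (the paper cites Theorem 7.26 of Rudin) so that the Jacobian in (\ref{fu}) cancels and the integral reduces to $\int \phi(\bx) f_\bX(\bx)\, d\bx$. Your additional remarks on the critical set and on the normalization of $f(\cdot\,|\,\theta)$ are careful touches the paper leaves implicit, but the substance of the argument is identical.
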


\begin{proof} Let $\phi$  be an arbitrary bounded function and fix a $\theta$.  Then by a standard change of variable argument \citep[Theorem 7.26]{rudin}  we have 
\begin{eqnarray*}
  \E[  \phi(\chi(\bU,\theta))] &=& \int \phi(\chi(\bu,\theta)) f(\bu\; |\; \theta) d\bu \\
  &=& \int \phi(\chi(\bu,\theta)) f_\bX(\chi(\bu,\theta))\cdot  \left|\det \partial_\bu \chi(\bu,\theta) \right| d\bu \\
   &=& \int \phi(\bx) f_\bX(\bx) d\bx\\
   &= &\E[\phi(\bX)].
  \end{eqnarray*}
  The result of the proposition then holds since $\phi$ was arbitrarily chosen. 
  \end{proof}
  
 \bigskip

We now introduce the following assumption:

\begin{assumption}
For any $\bu \in \cU$ and $\theta \in \Omega$, the equation $\tau(\bu,\theta)=t$ can be uniquely solved for $\theta$ by  $\theta= \hat \theta(\bu,t)$.
\end{assumption}  

\bigskip

In order to derive the conditional distribution of $\bX$ given $T(\bX)=\bt$, we will consider conditional expectations of a function $\phi$. Under Assumption~1 we have
\begin{eqnarray}
\E[\phi(\bX)|T(\bX)=\bt] &=& \nonumber \E[\phi(\chi(\bU,\Theta))|\tau(\bU,\Theta)=\bt] \\
&=& \E[\phi(\chi(\bU,\hat \theta(\bU,\bt)))|\tau(\bU,\Theta)=\bt], \label{comp}
\end{eqnarray}
where we used the substitution principle \citep{bahadurbickel} noting that 
$\tau(\bU,\Theta)=\bt \Leftrightarrow \Theta = \hat \theta(\bU,\bt)$. In order to calculate (\ref{comp}) we will hence need the conditional distribution of $\bU$ given $\tau(\bU,\Theta)=\bt$. This distribution is obtained from a standard transformation from $(\bU,\Theta)$ to $(\bU,\tau(\bU,\Theta))$, which gives the joint density $h(\bu,\bt)$ of $(\bU,\tau(\bU,\Theta)$ as
\beq
\label{huttetu}
 h(\bu,\bt) = f(\bu |  \hat \theta(\bu,\bt)) w(\bt,\bu),
 \eeq
 where $\bt \mapsto w(\bt,\bu)$ is the density of $\tau(\bu,\Theta)$ for fixed $\bu$. This density is given by
 \beq
 \label{wtu}
  w(\bt,\bu) =  \pi(\hat \theta(\bu,\bt)) \left|\det \partial_t \hat \theta(\bu,\bt)\right|
  =
  \left| \frac{\pi(\theta)}{\det \partial_\theta \tau(\bu,\theta)} \right|_{\theta=\hat \theta(\bu,\bt)},
 \eeq
where $\pi(\theta)$ is the density of $\Theta$. 
 From this we get the conditional distribution of $\bU$ given $\tau(\bU,\Theta)=\bt$  as $h(\bu|\bt) = h(\bu,\bt)/\int h(\bu,\bt) d\bu$, and we are then in a position to complete the calculation of (\ref{comp}):
\begin{eqnarray}
   \E[\phi(\bX)|\bT=\bt] &=& \E[\phi(\chi(\bU,\hat \theta(\bU,\bt)))|\tau(\bU,\Theta)=t] \nonumber \\
      &=& \int \phi(\chi(\bu,\hat \theta(\bu,\bt)) h(\bu|\bt) d\bu  \nonumber\\
      &=& \frac{\int \phi(\chi(\bu,\hat \theta(\bu,\bt))) h(\bu,\bt) d\bu }
      {\int  h(\bu,\bt) d\bu } . \label{efi}
      \end{eqnarray}

\subsection{Methods of computation and simulation from the conditional distribution}
The integrals in (\ref{efi}) will usually have an intractable form. The calculation of (\ref{efi}) or simulation of samples from $h(\bu|\bt)$, may hence be done by suitable numerical techniques. Some approaches are briefly considered below. 

\subsubsection{Importance sampling} Importance sampling appears to be the traditional method used in conditional Monte Carlo, see for example \cite{dubi}.  Consider the computation of (\ref{efi}).  If $\bU$ is sampled from a density $g(\bu)$, then (\ref{efi}) can be written
\[
     \E[\phi(\bX)|\bT=\bt]  = \frac{\E[\phi(\chi(\bU,\hat \theta(\bU,\bt)))h(\bU,\bt)/g(\bU)]}{\E[h(\bU,\bt)/g(\bU)]},
\]
which in principle is straightforward to calculate by Monte Carlo simulation.

\subsubsection{Rejection sampling}
\label{212} In order to obtain samples from the conditional distribution of $\bX$ given $\bT=\bt$, we need to first sample $\bU=\bu$ from a density proportional to $h(\bu,\bt)$, then solve the equation $\tau(\bu, \theta) = \bt$, and finally return the conditional sample $\hat \bx = \chi(\bu,\hat \theta(\bu,t))$. Let $\tilde h(\bu,\bt)$ be proportional to $h(\bu,\bt)$ as a function of $\bu$.  In rejection sampling \citep[p. 60]{RIPLEY} one samples from a density $g(\bu)$ with support which includes the support of $\tilde h(\bu,\bt)$ and for which we can find a bound $M<\infty $ such that $\tilde h/g \le M$.  One then samples $\bu$ from $g$ and a uniform random number $z \in[0,1]$ until $Mz \le \tilde h(\bu)/g(\bu)$. 

\subsubsection{Markov Chain Monte Carlo}
\label{213}
 A disadvantage of rejection sampling is the need for the bound $M$ which may be difficult to obtain. The  Metropolis-Hastings algorithm \citep{hastings} needs no such bound but, on the other hand, produces dependent samples. We describe below an approach where proposals of the  Metropolis-Hastings algorithm are independent samples $\bu$ from a density $g(\bu)$, where $g$, as for the rejection sampling method, needs to have a support which includes the support of $\tilde h(\bu,\bt)$. 

To initialize  the algorithm one needs an initial sample $\bu^{0}$ with $\tilde h(\bu^{0}, \bt) > 0$. Then for each iteration $k$, one generates (i) a proposal $\bu^{'}$
from $g(\cdot)$ ; (ii)  a uniform random number $z \in [0,1]$. One then accepts the proposal and let $\bu^{k+1}= \bu^{'}$ if 
\beq
\label{mcmc}
    z \le \frac{\tilde h(\bu^{'},\bt)}{\tilde h(\bu^{k},\bt)}
    \cdot  \frac{g(\bu^{k})}{g(\bu^{'})},
\eeq
and otherwise lets $\bu^{k+1}= \bu^{k}$. It should be noted that for each new proposal $\bu^{'}$ one needs to solve the equations leading to $\hat \theta(\bu^{'},\bt)$. As for  rejection sampling, one obtains the desired samples $\hat \bx^{k} = \chi(\bu^k,\hat \theta(\bu^k,\bt))$.

\subsubsection{The naive sampler}
\label{naive}
In order to check algorithms for conditional sampling, a type of benchmark might be to use a naive sampler as follows. Then $\bx$  are sampled from $f_\bX(\bx)$ and accepted if and only if $|T(\bx) - \bt| < \epsilon$ for an apriori chosen (small) $\epsilon >0$ and an appropriate norm $|\cdot|$. The successive accepted samples $\hat \bx$ are approximate samples from the desired conditional distribution, see Section~\ref{twopar} for examples.

\section{Application of the method}
\label{sec3}

As might be clair from the previous section, the choice of the function $\chi(\bu,\theta)$ and the marginal distribution for $\Theta$ are of crucial importance for the construction of an efficient algorithm.

\subsection{The choice of the function $\chi(\bu,\theta)$} 
The choice of $\chi(\bu,\theta)$ will obviously depend very much on the application, and we refer to the examples in order to give some advice here. The uniqueness requirement of Assumption~1 of course restricts considerably the choice.  An important issue is the requirement that the range of $\chi(\bu,\theta)$, for each $\theta$,  should include the support of $\bX$.  A further discussion on the form of the function  $\chi(\bu,\theta)$ is found in the concluding remarks of Section~\ref{sec7}. In particular is considered a possible relaxation of the uniqueness requirement of Assumption~1. 

\subsection{The choice of distribution of $\Theta$}
\label{thetat}

In Bayesian statistics it is well recognized that prior distributions for parameters may be chosen as improper distributions. Also, in the approach on conditional sampling given sufficient statistics in \cite{LT05} it was argued that a  random vector similar to our $\Theta$ may sometimes preferably be given an improper distribution. The following argument shows, however, that in the present approach, $\Theta$ must  be given a proper distribution (i.e., having an integrable density function $\pi(\theta)$). 

Suppose namely that $\Theta$ is given an improper distribution. In order  to condition on $\tau(\bU,\Theta)$ it is necessary that its density is finite. (In Bayesian analysis, this is the marginal density of the data which appears in the denominator of Bayes' formula.) This property implies that there is a set $A$ such that $P(\tau(\bU,\Theta) \in A) < \infty$, where this set clearly may be chosen so that $P(T(\bX) \in A)>0$. Now for this set $A$,
\begin{eqnarray*}
P(\tau(\bU,\Theta) \in A) &=& \int_\Omega P(\tau(\bU,\Theta) \in A\;|\;\Theta = \theta) \pi(\theta) d\theta \\
&=& \int_\Omega P(\tau(\bU,\theta) \in A\;|\;\Theta = \theta) \pi(\theta) d\theta \\
&=& P(T(\bX) \in A)\int_\Omega \pi(\theta) d\theta,
\end{eqnarray*} 
where the last equality follows from the basic property (\ref{pivot}). This clearly implies that $\int_\Omega \pi(\theta) d\theta < \infty$. 

In the following we shall therefore always assume that $\Theta$ is given an integrable density $\pi(\theta)$. (This density may of course be normalized to have integral one, but as we shall see in our examples, the normalizing constant is usually of no concern). Particular choices will depend on the application, but also on certain structural issues of the problem as explained in the next subsection. 

\subsection{The ``pivotal'' condition}
\label{piv}
In some cases it is possible to choose the function $\chi(\bu,\theta)$ in such a way that $\tau(\bu,\theta)$ depends on $\bu$ only through a lower dimensionable function $r(\bu)$, where the value of $r(\bu)$ can be uniquely
recovered from the equation $\tau(\bu,\theta) = \bt$ for given $\theta$ and $\bt$. This means that there is a function
$\tilde{\tau}$ such that $\tau(\bu,\theta) = \tilde{\tau}(r(\bu),\theta)$ for all $(\bu,\theta)$, and a function $\tilde{v}$
such that $\tilde{\tau}(r(\bu),\theta)=\bt$ implies $r(\bu)=\tilde{v}(\theta,\bt)$. 
The notion of ``pivotal'' for the present case is borrowed from \cite{LT05}, who considered a similar condition in which case  
 $\tilde{v}(\theta,\bT)$ is a pivotal quantity in the classical meaning of the notion. Although the setting here is different, we shall keep calling this the \textit{pivotal} condition.

Under  Assumption 1, the following equivalences hold under the pivotal condition:
\[
   \hat \theta(\bu,\bt)=\theta \; \Leftrightarrow  \;  \tau(\bu,\theta)=\bt  \; 
   \Leftrightarrow  \;  \tilde\tau(r(\bu),\theta)=\bt  \; 
   \Leftrightarrow   \; r(\bu) = \tilde v(\theta,\bt)
   \]
We hence have the identity
 \beq
   \nonumber
       r(\bu) = \tilde v(\hat \theta (\bu,\bt),\bt) \; \mbox{   for all $\bu,\bt$}
   \eeq 
so that 
$$\tau(\bu,\theta) = \tilde{\tau}(\tilde{v}(\hatt(\bu,\bt),\bt),\theta)$$
 and hence
$$\mbox{det} \;
\partial_\theta \tau(\bu,\theta) =
\mbox{det} \; \partial_\theta \tilde{\tau}(\tilde{v}(\hatt(\bu,\bt),\bt),\theta).$$ 
Substituting $\hatt(\bu,\bt)$ for $\theta$ it is
therefore seen that
\begin{equation}
\label{sufcond2} |\det \partial_\theta \tau(\bu,\theta)|_{\theta \lik \hatt(\bu,\bt)}
  \lik  J(\hat{\theta} (\bu, \bt),\bt)
\end{equation}
where
\[
        J(\theta,\bt) \lik  |\mbox{det} \partial_\theta \tilde{\tau}
        (\bv,\theta) |_{\bv \lik \tilde{v}(\theta,\bt)}.
\]
Consider first the case where $J(\hat{\theta} (\bu, \bt),\bt)$ factors as $K(\hat{\theta} (\bu, \bt))a(\bt)$. Suppose also that $f(\bu|\theta)$ factors as
\[
   f(\bu|\theta) = \rho(\theta) \tilde f(\bu|\theta).
\] 
Then (\ref{huttetu}) and (\ref{wtu})  suggest the choice of $\pi(\theta)$ proportional to $\rho(\theta)^{-1}K(\theta)$, which simplifies the expression for $h(\bu,\bt)$ in (\ref{huttetu}). In order to ensure that $\pi(\theta)$ has a finite integral, we might in addition restrict the support of $\pi$ to some bounded set, letting for example
\[
   \pi(\theta) = \rho(\theta)^{-1}K(\theta) I(\theta \in A),
\]
where $I(\cdot)$ is the indicator function of the condition in the parantheses, and $A$ is such that $\int_A \rho(\theta)^{-1}K(\theta)d\theta < \infty$. It follows in this case that
\beq
\label{hugitt}
    h(\bu,\bt) \propto \tilde f(\bu|\hat \theta(\bu,\bt)) I(\hat \theta(\bu,\bt) \in A).
\eeq
For the general pivotal case, leading to (\ref{sufcond2}), we would have to choose a $\pi(\cdot)$ that depends on $\bt$, by replacing $K(\theta)$ by 
$J(\theta,\bt)$ in the above. 
Since $\bt$ is fixed when conditioning on $\bT=\bt$, it is seen that the crucial arguments will go through also in this case, thus still leading to (\ref{hugitt}).  A similar argument was used in \cite{LT05}. 

As a further refinement, it may happen that $r(\bU)$ is a sufficient statistic in the model defined by $f(\bu|\theta)$. Then by Neyman's factorization criterion \citep[Ch. 6]{casella}, we can write
\[
   f(\bu|\theta) = p(r(\bu)|\theta) q(\bu) 
\]
for appropriate functions $p$ and $q$. Hence we can write
\[
   f(\bu|\hat \theta(\bu,\bt)) = p(\tilde v(\hat \theta(\bu,\bt),\bt)|\hat \theta(\bu,\bt)) q(\bu) 
\]
By assimilating the $p(\cdot)$-part of the above into $\pi(\hatt(\bu,\bt))$ (where $\pi(\cdot)$ will now possibly depend on $\bt$) we get
\beq
\nonumber
    h(\bu,\bt) \propto q(\bu) I(\hat \theta(\bu,\bt) \in A).
\eeq

\section{Examples}
\label{sec4}

\subsection{Two examples involving the pivotal condition}

\subsubsection{Conditional sampling of uniforms} 
\label{411}
Let $\bX=(X_1,X_2,\ldots,X_n)$ be an i.i.d sample from $U[0,1]$, the uniform distribution on $[0,1]$,   and let $T(\bX)=\sum_{i=1}^n X_i$. Suppose one wants to sample from the conditional distribution of $\bX$ given $T(\bX)=t$ where $0<t<n$.  There appears to be no simple expression for this conditional distribution. \cite{LT05} considered an approach where the uniform distribution is embedded in a parametric family involving truncated exponential distributions and utilzed the sufficiency of $T(\bX)$ in this model.  The resulting method is, however, surprisingly complicated in absence of the pivotal condition of \cite{LT05}. A Gibbs sampling method was devised by \cite{rannestad}, apparently being much quicker than the former method, and much easier to implement. 

We now present a simple solution to the problem using the approach of the previous sections and utilizing the presence of a pivotal condition as studied in Section~\ref{piv}. An advantage as compared to the Gibbs sampling algorithm is that the present method produces independent samples.

Let $\bU=(U_1,U_2,\ldots,U_n)$ be an i.i.d.\ sample from  $U[0,\theta]$,  where $\theta \in (0,1]$. Then the $U_i/\theta$ are i.i.d.\ from $U[0,1]$, so  condition (1) in Section~\ref{sec2} is satisfied with 
\beq
\label{chiq}
\chi(\bu,\theta) = \left( \frac{u_1}{\theta},\frac{u_2}{\theta},\ldots, \frac{u_n}{\theta} \right).
\eeq 
defined for $\bu \in [0,1]^n$ and $\theta \in (0,1]$. 

The above is moreover in accordance with Proposition~\ref{thm1}, which readily gives 
\[
    f(\bu\;|\;\theta) = \frac{1}{\theta^n} I(\max_i u_i \le \theta),
\] 
where we used that $f_\bX(\bx) = I(\max x_i \le 1)$. Note that  here and below we tacitly assume that we are working with nonnegative variables only.
 
Now we have
\[
   \tau(\bu,\theta) = 
    \frac{\sum_{i=1}^n u_i}{\theta},
\]
and hence there exists a unique solution for $\theta$ of the equation $\tau(\bu,\theta)=t$, given by
\beq
\label{thatq}
    \hat \theta(\bu,t) = \frac{\sum_{i=1}^n u_i}{t}.
\eeq
Clearly, the pivotal condition of Section~\ref{piv} is satisfied with $r(\bu)=\sum_{i=1}^n u_i$. It is seen, however, that $r(\bu)$ does not correspond to a sufficient statistic for the model $f(\bu|\theta)$. We should therefore stick to (\ref{hugitt}), which by choosing $A=[0,1]$ gives 
\begin{eqnarray}
     h(\bu,t) &\propto &  I(\max_i u_i \le \hatt(\bu,\bt)) \cdot I(\hatt(\bu,\bt) \le 1)
     \nonumber \\
     & =& I\left(\max u_i \le \frac{\sum_{i=1}^n u_i}{t}\right)
    \cdot I\left(\frac{\sum_{i=1}^n u_i}{t}\le 1\right)  \nonumber \\
		&=& I\left(t \cdot \max_i u_i \le
    \sum_{i=1}^n u_i \le t\right). \label{condition}
    \end{eqnarray}
Thus  the conditional distribution $h(\bu| t)$ is uniform on the set of $\bu \in [0,1]^n$ satisfying the restriction given by the indicator function in (\ref{condition}).  We may hence sample the $u_i$ independently from $U[0,1]$ and accept the sample if and only if the restriction
		is satisfied. (Note that if $t \le 1$, then the left inequality in (\ref{condition}) is always satisfied.)
Finally, for the accepted samples we conclude from (\ref{chiq}) and (\ref{thatq}) that the resulting conditional sample is
\beq
\label{finals}
\hat \bx =    \left( t \; \frac{u_1}{\sum_{i=1}^n u_i}, t \; \frac{u_2}{\sum_{i=1}^n u_i} , \ldots,t \; \frac{u_n}{\sum_{i=1}^n u_i} \right).
\eeq

Figure~\ref{figure2} shows the result of a simulation with $n=2$ and $t=0.3$. It is easy to show by direct calculation that the conditional distribution of $X_1$ (and hence also of $X_2$) given $X_1+X_2=0.3$ is uniform on $[0,0.3]$. The left panel of the figure shows the empirical cumulative distribution of $X_1$ (and $X_2$) resulting from (\ref{finals}), which is clearly uniform on $[0,0.3]$ as expected. The right panel of the figure shows, on the other hand, the empirical distribution obtained from (\ref{finals}) when  sampling $(u_1,u_2)$ i.i.d.\ from $U[0,1]$ without ignoring the pairs $(u_1,u_2)$ with $u_1 + u_2 > 0.3$ (which is required by (\ref{condition})). 
\begin{figure}
		\centering
		\includegraphics[width = 8cm]{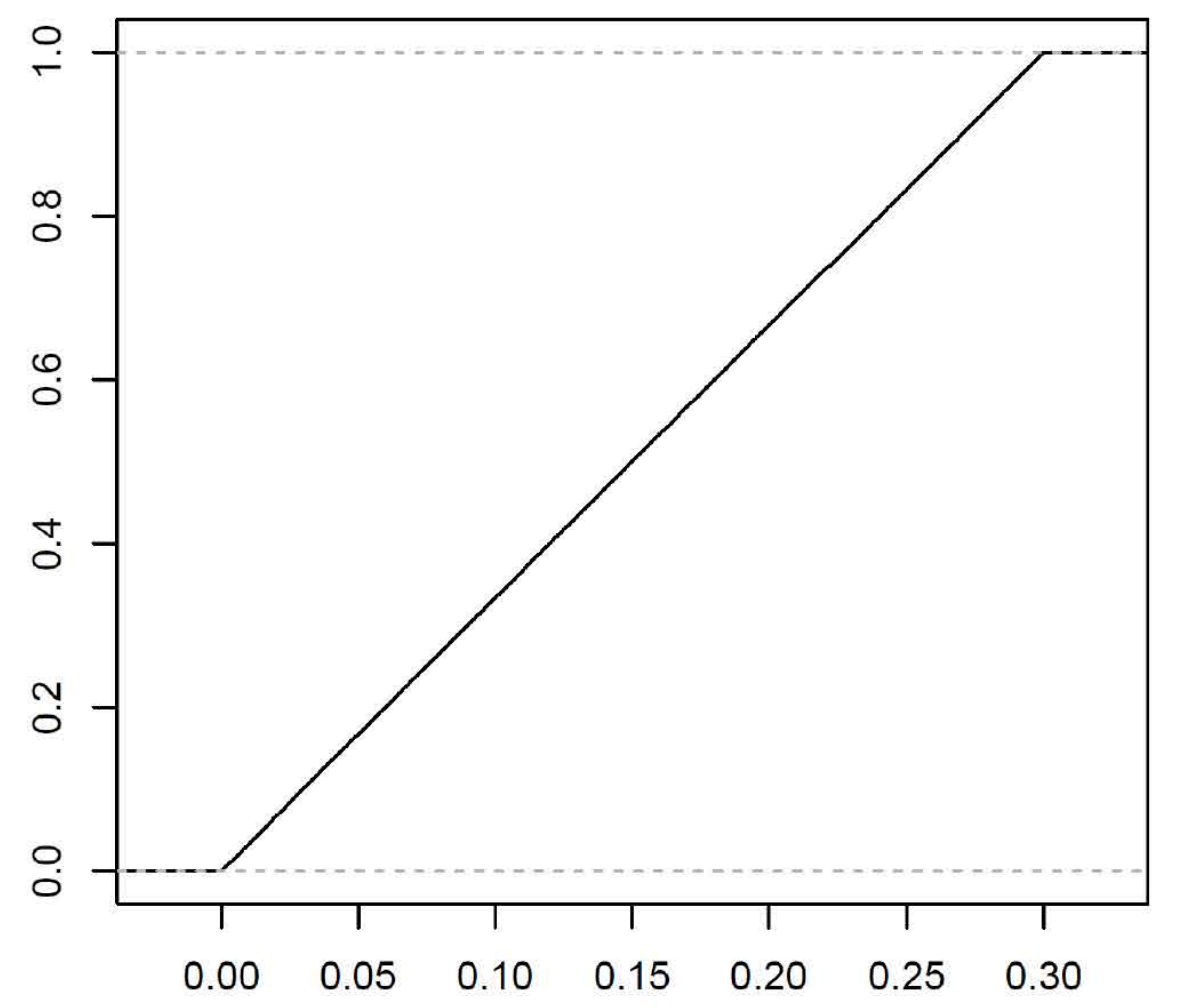}
\includegraphics[width=8cm]{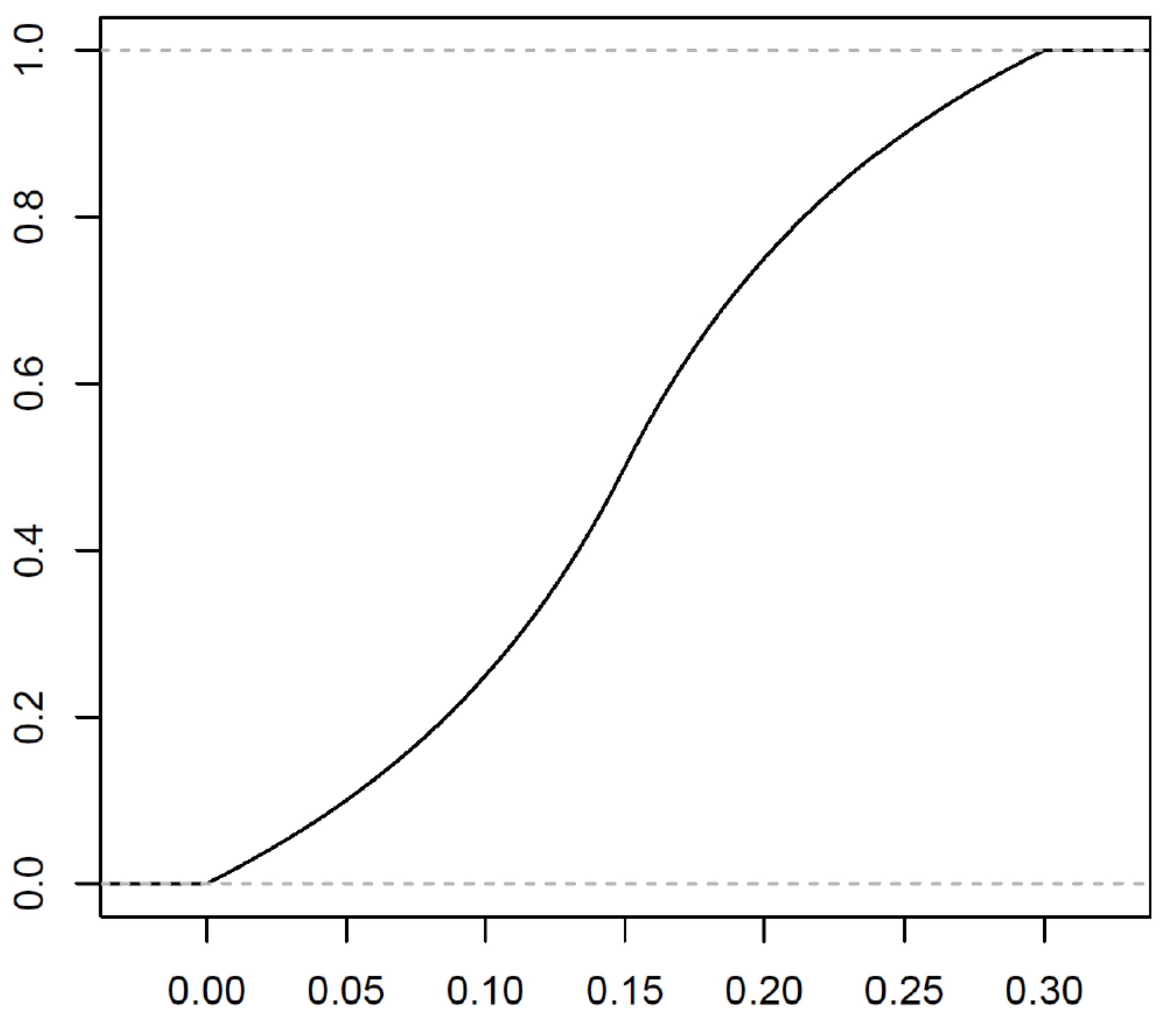} 
\caption{Empirical distribution functions for marginal distributions of conditional samples of uniforms when $n=2$. Left: Sampling $(u_1,u_2)\sim U[0,1]$ and using (\ref{condition}) and (\ref{finals}). Right: Sampling $(u_1,u_2)\sim U[0,1]$ and using (\ref{finals}) only. }	
	\label{figure2}
\end{figure}
The discrepancy from a straight line shows  that  condition (\ref{condition}) is necessary here.
Still the algorithm is very simple, and simpler than the corresponding algorithms of \cite{LT05} and \cite{rannestad} that were mentioned above.

The algorithm may be slow if $t$ is close to 0 or $n$. In these cases it might be better to use importance sampling by drawing the $u_i$ from a density $g(u) = c u^{c-1}$ for $c>0$, where $c$ is small (large) if $t$ is close to $0$ (close to $n$). But note that we will then need to sample from a non-uniform density $h(\bu|t)$.

As a final remark on this example, suppose instead that we wanted to condition on $\sum_{i=1}^n X_i^r=t$ for some given $r>0$. It is then straightforward to check that only a minor modification of the above derivation is needed. As a result, one should still sample $u_i$ from $U[0,1]$, but change condition (\ref{condition}) into
\[
I\left(t \cdot \max_i u_i^r \le
    \sum_{i=1}^n u_i^r \le t\right)
\] 
and use the samples $\hat \bx$ where
\[
\hat x_i =  t^{1/r} \; \frac{u_i}{(\sum_{\ell=1}^n u_\ell^r)^{1/r}}.
\]

\subsubsection{Conditional sampling of normals}
\label{412}
The following is a classical example in conditional Monte Carlo, see e.g.\ \cite{tukey}, \cite{hammersley}, \cite{grano}, \cite{RIPLEY}. 
Let $\bX=(X_1,X_2,\ldots , X_n)$ be i.i.d from $N(0,1)$ and let $T(\bX)=\max_i X_i - \min_i X_i$. We wish to sample from the conditional distribution of $\bX$, given $T(\bX)=t$ for $t>0$. 

Now let $\bU=(U_1,U_2,\ldots,U_n)$ be an i.i.d.\ sample from  $N(0,\theta^2)$. Then condition (1) in Section~\ref{sec2} is clearly satisfied when $\chi(\bu,\theta)$ is given by the scale transformation (\ref{chiq}) for $\bu=(u_1, u_2, \ldots , u_n) \in \RealN^n$ and $\theta \in (0,1]$. It is furthermore  seen that the pivotal condition of Section~\ref{piv} is satisfied with $r(\bu) = \max_i u_i - \min_i u_i$, and in a similar way as for the uniform distribution case treated above, we arrive at 
\begin{equation}
\label{norm}h(\bu ,t)\propto \exp\left(-{\frac{t^2}{2(\max_i u_i-\min_i u_i)^2}} \sum_{i=1}^n u_i^2\right)I(\max_i u_i - \min_i u_i < t),
\end{equation}
	by letting $\pi(\cdot)$ be supported on the interval $[0,1]$. Actually, we have used $\pi(\theta)=\theta^{n-1}I(0<\theta \le 1)$.

Noting that the right hand side of (\ref{norm}) is less than or equal to 
\[
\exp\left(-\frac{1}{2} \sum_{i=1}^n u_i^2\right)I(\max_i u_i - \min_i u_i < t)
\]
we can use rejection sampling (Section~\ref{212}) based on sampling of i.i.d. standard normal variates. If $t$ is small, then in order to increase the acceptance probability of the rejection sampling, it might be beneficial to use as the proposal distribution, a mixture of a standard normal and a normal distribution with small variance. 

The resulting conditional samples  are now of the form
\beq
\nonumber
\hat \bx = 
    \left( t \; \frac{u_1}{\max_i u_i-\min_i u_i},   \ldots, t\; \frac{u_n}{\max_i u_i-\min_i u_i} \right).
\eeq

\subsection{Conditional sampling from  two-parameter exponential families}
\label{twopar}

Suppose $\bX=(X_1,X_2,\ldots,X_n)$ is distributed as an i.i.d.\ sample from a two-parameter exponential family of \textit{positive} random variables, with minimal sufficient statistic 
\beq
\label{kanskje}
\bT(\bX) = (T_1(\bX),T_2(\bX)) = \left(\sum_{i=1}^n g_1(X_i), \sum_{i=1}^n g_2(X_i)\right).
\eeq
Suppose now that $\bt=(t_1,t_2)$ is the observed value of $\bT(\bX)$, and that we want to sample $\bX=(X_1,X_2,\ldots,X_n)$ conditionally on $\bT(\bX)=\bt$. By sufficiency, samples from the conditional distribution of $\bX$ given $\bT(\bX) =\bt $ can be obtained by choosing any density from the given family as the basic density. Let $f_\bX(\bx)=\prod_{i=1}^n f_X(x_i)$ be the chosen density. 

Let 
\beq
\label{kjiexp}
    \chi(\bu,\theta) = \left(  \left( \frac{u_1}{\beta} \right)^\alpha, \left(\frac{u_2}{\beta} \right)^\alpha,  \ldots,  \left( \frac{u_n}{\beta} \right)^\alpha \right),
\eeq
where $\bu=(u_1,u_2,\ldots,u_n)$ is a vector of positive numbers and $\theta=(\alpha,\beta)$ is a pair of positve parameters $\alpha, \beta$. 
Then, using Proposition~\ref{thm1}, condition (\ref{pivot}) of Section~\ref{sec2} is satisfied if $\bU$ for given $\theta$ has density
\beq
\label{futh}
f(\bu\;|\;\theta) = \prod_{i=1}^n \frac{\alpha}{\beta}  \left( \frac{u_i}{\beta} \right)^{\alpha-1} f_X\left( \left(\frac{u_i}{\beta}\right)^\alpha\right).
\eeq
Assumption~1 requires that there is a unique solution for $\theta$ of the equation
\[
    \tau(\bu,\theta) = \bt,
    \]
 which here means
    \[
     \sum_{i=1}^n g_1 \left( \left( \frac{u_i}{\beta} \right)^\alpha \right) = t_1,
     \]
     \[
     \sum_{i=1}^n g_2 \left( \left( \frac{u_i}{\beta} \right)^\alpha \right) = t_2.
     \]
Assume that there is a unique solution $\hat \theta(\bu,\bt) = (\hat \alpha(\bu,\bt), \hat \beta (\bu,\bt))$ of these equations.

Letting $\pi(\theta)\equiv \pi(\alpha,\beta)$ be the density of $\Theta$, 
the density $h(\bu,\bt)$ is found from (\ref{huttetu}) and (\ref{wtu}), giving 
 \begin{eqnarray} \nonumber
  h(\bu,\bt)& =& f(\bu\;|\;\hat \theta(\bu,\bt)) w(\bt,\bu) \\
  &=&  \frac{(\hat \alpha/\hat \beta)^n \left(\prod_{i=1}^n \hat x_i \right)^{1-1/\hat \alpha} \left( \prod_{i=1}^n f_X(\hat x_i)\right) \pi(\hat \alpha,\hat \beta)}
           {|\det \partial_\theta \tau(\bu,\theta) |_{\theta=\hat \theta(\bu,\bt)} }. \label{hutex}
\end{eqnarray}
where 
\beq
   \label{xhat} 
     \hat x_i = \left( \frac{u_i}{\hat \beta} \right)^{\hat \alpha}
    \eeq
and
   \begin{eqnarray*}
   \det \partial_\theta \tau(\bu,\theta) |_{\theta=\hat \theta(\bu,\bt)} 
   &=&  \frac{1}{\hat \beta(\bu,\bt)} \left[ \left(\sum_{i=1}^n g_1'(\hat x_i)\hat x_i \right) \left(\sum_{i=1}^n g_2'(\hat x_i)\hat x_i \log(\hat x_i)\right) \right. \\
  & -&  \left. \left(\sum_{i=1}^n g_2'(\hat x_i)\hat x_i \right) \left(\sum_{i=1}^n g_1'(\hat x_i)\hat x_i \log(\hat x_i)\right) \right].
   \end{eqnarray*}

When sampling from (\ref{hutex}) by the Metropolis-Hastings algorithm (Section~\ref{213}) it seems to be a good idea to let the proposal distribution $g(\bu)$ be the distribution of the original exponential familiy with parameter values equal to the maximum likelihood estimates based on the observation $\bt$. Then the calculated $\hat \alpha,\hat\beta$ are expected to be  around 1, and we therefore suggest to choose the distribution of $\Theta$  as $\pi(\alpha,\beta)= I( a_1 \le \alpha \le a_2, b_1 \le \alpha \le b_2)$ for suitably chosen $0<a_1 < 1 < a_2$, $0<b_1 < 1 < b_2$, see examples in Section~\ref{sec5}. 

In a practical application one would usually also have the original data $\bx = (x_1,\ldots,x_n)$ which led to the values $t_1=T_1(\bx), t_2=T_2(\bx)$. The vector $\bx$ may then be used as the initial sample of the Metropolis-Hastings simulation, and will give $\hat \alpha = \hat \beta = 1$.  In this case, the successively simulated accepted conditional samples 
$\hat \bx = (\hat x_1,\ldots,\hat x_n)$ defined by (\ref{xhat}) will have the correct distribution, so there is no need for a burn-in period in the Metropolis-Hastings simulations. 
    
\subsubsection{Gamma Distribution}
\label{421}
The gamma-distribution with shape parameter $k>0$ and scale parameter $\theta>0$ has 
 density
\beq
\label{gammaden}
   f(x;k,\theta) =  \frac{1}{\theta^k \Gamma(k)} x^{k-1}e^{-x/\theta} \mbox{ for $x>0$}.
    \eeq
    We suggest using $k=\theta=1$ to get $f_X(x)=e^{-x}$. Referring to (\ref{kanskje}), we have for the gamma model, $g_1(x)=x, \; g_2(x)=\log x$, and hence we need to solve the equations
\begin{eqnarray}
\sum_{i=1}^n   \left( \frac{u_i}{\beta} \right)^\alpha & =& t_1, \label{gamma1}\\
\sum_{i=1}^n \log  \left( \frac{u_i}{\beta} \right)^\alpha  &=& t_2. \label{gamma2}
\end{eqnarray}
It is shown in Lemma~\ref{unique1} in Appendix that there is a unique solution $(\hat \alpha,\hat \theta)$ for $(\alpha,\theta)$. The actual solution turns out to be easily  obtained via a single equation involving~$\alpha$. Now  
(\ref{hutex}) gives
\begin{equation}\nonumber
   h(\bu,\bt) = \frac{(\hat \alpha/\hat \beta)^n e^{(1-1/\hat \alpha)t_2} e^{-t_1} \pi(\hat \alpha,\hat \beta)}
           {(1/\hat \beta)\left(t_1t_2-n \sum_{i=1}^n \hat x_i \log \hat x_i\right)},
           \end{equation}
           which is the basis for simulation of conditional samples as outlined above. It is easy to see, however, that the pivotal condition of Section~\ref{piv} is not satisfied here.

      \subsubsection{Inverse Gaussian Distribution}      
The Inverse Gaussian distribution has density which can be written as 
 \beq
 \label{igdensity}
 f (x; \mu, \lambda) = \sqrt{\frac{\lambda}{2 \pi x^3}} \exp \left( -\frac{\lambda}{2x} -\frac{\lambda 
x}{2\mu^2}+\frac{\lambda}{\mu} \right), \;\; x 
> 0.
\eeq
Let now $f_X(x)$ be the density obtained when $\mu=\lambda=1$, i.e.
\[
 f_X (x) = \sqrt{\frac{1}{2 \pi x^3}} \exp \left( -\frac{1}{2x} -\frac{x}{2}+1 \right), \;\; x 
> 0.
\]
Furthermore, for the inverse Gaussian distributions we can choose $g_1(x)=x, \; g_2(x)=1/x$ \citep[p. 7]{seshadri}, and hence we obtain the equations
\begin{eqnarray*}
\sum_{i=1}^n   \left( \frac{u_i}{\beta} \right)^\alpha  &=& t_1, \\
\sum_{i=1}^n  \left( \frac{u_i}{\beta} \right)^{-\alpha} & =& t_2.
\end{eqnarray*}
As for the gamma case, there is a unique solution $(\hat \alpha,\hat \beta)$ for $(\alpha,\beta)$, see Lemma \ref{unique2} in the Appendix. Now we get from (\ref{hutex}),
\begin{equation}\nonumber
   h(\bu,\bt) = \frac{(\hat \alpha/\hat \beta)^n \left( \prod_{i=1}^n \hat x_i \right)^{-1/2-1/\hat \alpha} e^{-(1/2)(t_1+t_2)+n} \pi(\hat \alpha,\hat \beta)}
           {(1/\hat \beta) \left(t_2 \sum_{i=1}^n \hat x_i \log \hat x_i - t_1\sum_{i=1}^n  \log \hat x_i/\hat x_i \right)}.
           \end{equation}
It was suggested above to use the parametric model itself as a proposal distribution in Metropolis-Hastings simulations, with parameters given by the maximum likelihood estimates from the original data. Following \cite[p. 7]{seshadri}, the maximum likelihood estimates of the parameters in (\ref{igdensity}) are given from
\[
  \hat \mu = \bar x, \; \; \; \hat \lambda^{-1} = \frac{1}{n} \sum_{i=1}^n \left( \frac{1}{x_i}-\frac{1}{\bar x} \right).
\]
Note also that \cite{michael} presented a nice method of simulating from the inverse Gaussian distribution.

\vekk{ 
\subsubsection{Conditional sampling in von Mises distribution\\ THIS SECTION IS SO FAR NOT CHECKED FOR APPLICABILITY}
The von Mises distribution can be described as the distribution of a point $P$ on the unit circle in $R^2$, represented by its angular coordinate $x$ with respect to the origo $(0,0)$. The von Mises density is
\beq
\nonumber
  f(x; \theta, \kappa) = \frac{1}{2 \pi I_0(\kappa)} \exp\{ \kappa \cos(x-\theta) \} ,
\eeq
where $\theta$ is the location in $[0,2\pi)$,
 $\kappa$ is a positive shape parameter and $I_0(\kappa)$ is the modified Bessel function of order 0. Suppose $\bX=(X_1,\ldots,X_n)$ is an i.i.d.\ sample from this distribution. Then a minimal sufficient statistic is $T(\bX)$ given by
\[
  T_1(\bX) = \sum_{i=1}^n \cos(X_i), \; \; 
   T_2(\bX) = \sum_{i=1}^n \sin(X_i),
\]
which is seen by writing $ f(x; \theta, \kappa) =(2 \pi I_0(\kappa))^{-1} \exp\{ \kappa( \cos \theta \cos x + \sin \theta \sin x)\}$.  
We now wish to sample from the conditional distribution of $\bX$ given $T_1(\bX)=t_1, T_2(\bX)=t_2$. By sufficiency, we can freely choose parameter values, so let us choose $\theta=0,\kappa=1$. We thus have
\[
f_X(x)  = \frac{1}{2 \pi I_0(1)} \exp\{ \cos(x)\}.
\]
The choice of the function $\chi(\cdot,\cdot)$ is now the crucial step. As seen below, we have not succeeded in finding a $\chi(\cdot,\cdot)$ which guarantees unique solutions as required in Assumption~1. The method will thus rely on the approach of Section~\ref{multsol}.

Let 
\beq
\label{grunn}
   \chi(u; \alpha,\theta) = \alpha u - \theta,
\eeq
where $\alpha > 0$, $\theta \in [0,2\pi)$ and $u>0$,
which implies
\[
   f(\bu|\alpha,\theta) = \frac{\alpha^n}{(2 \pi I_0(1))^n}
   \exp\left\{\sum_{i=1}^n\cos(\alpha u_i-\theta) \right\}
\]
For a given $\bu$ we now need to solve for $\alpha$ and $\theta$ the system of equations 
\begin{eqnarray}
\label{tau1}
   \tau_1(\bu;\alpha,\theta) =  \sum_{i=1}^n \cos(\alpha u_i - \theta) &=& t_1 \\ \label{tau2}
  \tau_2(\bu;\alpha,\theta) =  \sum_{i=1}^n \sin(\alpha u_i - \theta) &=& t_2
\end{eqnarray}
In the following we shall make repeated use of the two well known formulas
\begin{eqnarray}
 \cos(u-v) &=& \cos u \cos v + \sin u \sin v \label{cosmin}\\
 \sin(u-v) &=& \sin u \cos v - \cos u \sin v \label{sinmin}
 \end{eqnarray}
It follows immediately that the system (\ref{tau1})-(\ref{tau2}) is equivalent to 
\begin{eqnarray}
\label{AB1}
    A(\alpha) \cos \theta  + B(\alpha)\sin \theta    &=& t_1 \label{p1}\\
  \label{AB2}   B(\alpha) \cos \theta  -A(\alpha) \sin \theta   &=& t_2 \label{p2}
\end{eqnarray}
where
\begin{equation}
   A(\alpha) = \sum_{i=1}^n \cos(\alpha u_i), \; \;
   B(\alpha) = \sum_{i=1}^n \sin(\alpha u_i) . \nonumber
\end{equation}
From (\ref{AB1}) and (\ref{AB2}) we easily obtain
\begin{eqnarray}
\label{kid1}
\cos \theta &=& \frac{A( \alpha) t_1 +  B( \alpha) t_2}
{A^2( \alpha) +  B^2( \alpha)} \\ \label{kid2}
\sin \theta &=& \frac{B( \alpha) t_1 - A( \alpha) t_2}
{A^2( \alpha) +  B^2( \alpha)}
\end{eqnarray}
and by squaring and adding the equations (\ref{kid1}) and (\ref{kid2}) we arrive at 
\beq
\label{sqr}
  A^2(\alpha)  +B^2(\alpha) = t_1^2 + t_2^2
\eeq
or
\[
  \left( \sum_{i=1}^n \cos(\alpha u_i)\right)^2 +
   \left( \sum_{i=1}^n \sin(\alpha u_i) \right)^2 = t_1^2 + t_2^2
\]
Thus we have a single equation for $\alpha$. 
Working out the squares, simplifying using $\sin^2 + \cos^2 =1$,  and again using (\ref{cosmin}) and (\ref{sinmin}), we arrive at the equation
\[
    n + 2 \sum_{i<j} \cos(\alpha(u_i-u_j)) = t_1^2 + t_2^2
\]
or
\beq
\label{ddd}
   \sum_{i<j} \cos(\alpha(u_i-u_j)) = \frac{t_1^2 + t_2^2 - n}{2}
\eeq
In order to check the monotonicity properties of the left hand side of (\ref{ddd}) as a function of $\alpha$, we simulated $n=10$ values of $u_i$ from $N(\pi,1.5)$ and plotted the left hand side of (\ref{ddd}). The result is displayed in Figure~\ref{yz} and shows that there is indeed a possibility of multiple roots. Simulations indicated, on the other hand, that restricting $\alpha$ to $(0,1=$ might be advantageous. 

The solutions for $\theta$ can now be found from (\ref{kid1} or (\ref{kid2}). Note that to each solution for $\alpha$ there will be a different solution for $\theta$. 
\begin{figure}
\centering
\includegraphics[width=9.5cm]{multiple-roots}
\caption{The left hand side of (\ref{ddd}) simulated with $n=10$}
\label{yz}
\end{figure}

From (\ref{tau1}) and (\ref{tau2}) we get 
\begin{eqnarray*} \det
\partial_{\alpha,\theta} \tau(\bu;\alpha,\theta)
&=& \left(\sum_{i=1}^n u_i\sin(\alpha u_i-\theta)\right)
\left(\sum_{i=1}^n \cos(\alpha u_i-\theta)\right)  \\ &-&
\left(\sum_{i=1}^n u_i\cos(\alpha u_i-\theta)\right)
\left(\sum_{i=1}^n \sin(\alpha u_i-\theta)\right)
\end{eqnarray*}
which when applying at a solution $(\hat \alpha,\hat \theta)$ of (\ref{tau1}) and (\ref{tau2}) and using (\ref{cosmin}) and (\ref{sinmin}), becomes
\begin{eqnarray*} \det
\left. \partial_{\alpha,\theta} \tau(\bu;\alpha,\theta)\right|_{(\alpha,\theta)=(\hat \alpha,\hat \theta)}
&=& t_1 \cos\hat \theta\sum_{i=1}^n u_i\sin(\hat \alpha u_i) -
t_1 \sin \hat\theta \sum_{i=1}^n \cos(\hat\alpha u_i)  \\ &-&
t_2 \cos \hat\theta\sum_{i=1}^n u_i\cos(\hat\alpha u_i)
- t_2 \sin \hat\theta \sum_{i=1}^n \sin(\hat\alpha u_i) \\
&=& t_1 \cos\hat \theta\sum_{i=1}^n u_i\sin(\hat \alpha u_i) -
t_2 \cos \hat\theta\sum_{i=1}^n u_i\cos(\hat\alpha u_i) \\
&-& t_1 A(\hat \alpha) \sin \hat\theta 
- t_2 B(\hat \alpha)\sin \hat\theta  \\
&=& t_1 \cos\hat \theta\sum_{i=1}^n u_i\sin(\hat \alpha u_i) -
t_2 \cos \hat\theta\sum_{i=1}^n u_i\cos(\hat\alpha u_i) \\
&-& (t_1^2+t_2^2) \sin \hat\theta \cos \hat\theta 
\end{eqnarray*}
where for the last equality we used (\ref{kid1}) and (\ref{sqr}). Noting that these equations also imply 
\begin{eqnarray*}
\label{kidd1}
\cos \hat \theta &=& \frac{A( \hat\alpha) t_1 +  B(\hat \alpha) t_2}
{t_1^2+t_2^2} \\ \label{kidd2}
\sin \hat \theta &=& \frac{B( \hat\alpha) t_1 -  A(\hat \alpha) t_2}
{t_1^2+t_2^2}
\end{eqnarray*}
it is seen that $\det \left. \partial_{\alpha,\theta} \tau(\bu;\alpha,\theta)\right|_{(\alpha,\theta)=(\hat \alpha,\hat \theta)}$ does not involve $\hat \theta$. This in fact also applies to $f(\bu|\hat \alpha,\hat \theta)$, which becomes 
\[
    f(\bu|\hat \alpha,\hat \theta) = \frac{\hat \alpha^n}{(2\pi I_0(1))^n}
    \exp\{t_1\}
\]
Then, in order to be able to obtain $h(\bu,\bt)$ from (\ref{mult}), we note that the expression can be written without including the calculated $\hat \theta$. 

To finish, suppose $\bu$ is a sample from the density proportional to $h(\bu,\bt)$. The $i$th entry of the resulting conditional sample is, by (\ref{grunn}), 
\[
\hat x_i = \chi(u_i,\hat \alpha,\hat \theta) = \hat \alpha u_i - \hat \theta
\]
which most conveniently can be given by the cosine,
\[
\cos(\hat x_i) = \cos( \hat \alpha u_i)\cos(\hat \theta)
+ \sin( \hat \alpha u_i)\sin(\hat \theta)
\]
} 

\section{A simulation study}
\label{sec5}

A simulation study was performed in order to illustrate the algorithms of 
 Section~\ref{twopar} for the gamma and inverse Gaussian distributions, respectively. The setup of the study is summarized in Table~\ref{table:1}. 

For example, in case 1, a sample $\bx$ with $n=3$ was drawn from a gamma distribution, giving the observed sufficient statistic $(t_1,t_2)=(4.86,1.02)$. Conditional samples were then simulated using the Metropolis-Hastings algorithm in the way described in Section~\ref{twopar}. More precisely, the proposal distribution was chosen to be the gamma density (\ref{gammaden}) using the maximum likelihood estimates $\hat k=3.66,\; \hat \theta=0.44$ as parameters. The density $\pi(\alpha,\beta)$ was chosen to be uniform over $(\alpha, \beta) \in [0.5,1.5] \times [0.5,1.5]$. In addition, we applied  the naive sampling method described in Section~\ref{naive}. Values $\epsilon_1,\epsilon_2$ (see Table~\ref{table:1}) were chosen so that the sampler accepts an i.i.d. sample $\bx'=(x_1', x_2', \ldots , x_n')$ from the proposal distribution if and only if 
\[
|T_1(\bx')-t_1|\leq \epsilon_1\ \text{and}\ |T_2(\bx')-t_2|\leq \epsilon_2.
\]
In case 1 were used $\epsilon_1=\epsilon_2=10^{-2}$. 
Both the Metropolis-Hastings algorithm and the naive sampler were ran for enough iterations to produce at least $10^4$ samples. 

The description is similar for cases 2-4. Figure~\ref{figure1} shows, for each of the four cases in Table~\ref{table:1},  the simulated cumulative distribution functions for the sampled $\hat x_1$. The closeness of the curves corresponding to the two methods is remarkable. Considering the naive sampler as a ``benchmark'', although only approximately correct, this closeness can be taken as a confirmation that the algorithms derived in the paper are producing samples from the correct conditional distributions.

\begin{table}[h!]
	\begin{center}
		\begin{tabular}{||c || c c c c c c ||}  	
			\hline
			Case  & $t_1,t_2$  & $n$ & \text{Distribution} & Sample sizes & $\pi$ & $\epsilon_1, \epsilon_2$\\ 
			\hline\hline
			$1$ & $4.86,1.02$ & $3$ & \text{Gamma} & $10^4$ & $I_{[0.5,1.5]^2}$ & $10^{-2}, 10^{-2}$ \\ 
			\hline
			$2$ & $16.49,2.85$ & $10$  & \text{Gamma} & $10^4$ & $I_{[0.5,1.5]^2}$ & $10^{-1}, 10^{-1}$ \\  
			\hline
			$3$ & $3.67,6.01$ & $3$ & \text{Inverse Gaussian} & $10^4$ & $I_{[0.5,1.5]^2}$ & $10^{-1}, 10^{-1}$ \\  
			\hline
			$4$ & $936.36, 0.59$ & $10$ & \text{Inverse Gaussian} & $10^4$ & $I_{[0.5,1.5]^2}$ & $10^{-1}, 10^{-1}$ \\ 
			\hline
		\end{tabular}
	\end{center}
	\caption{Values used for generating examples.}
	\label{table:1}
\end{table}

\begin{figure}
		\centering
		\includegraphics[width = 6.7cm]{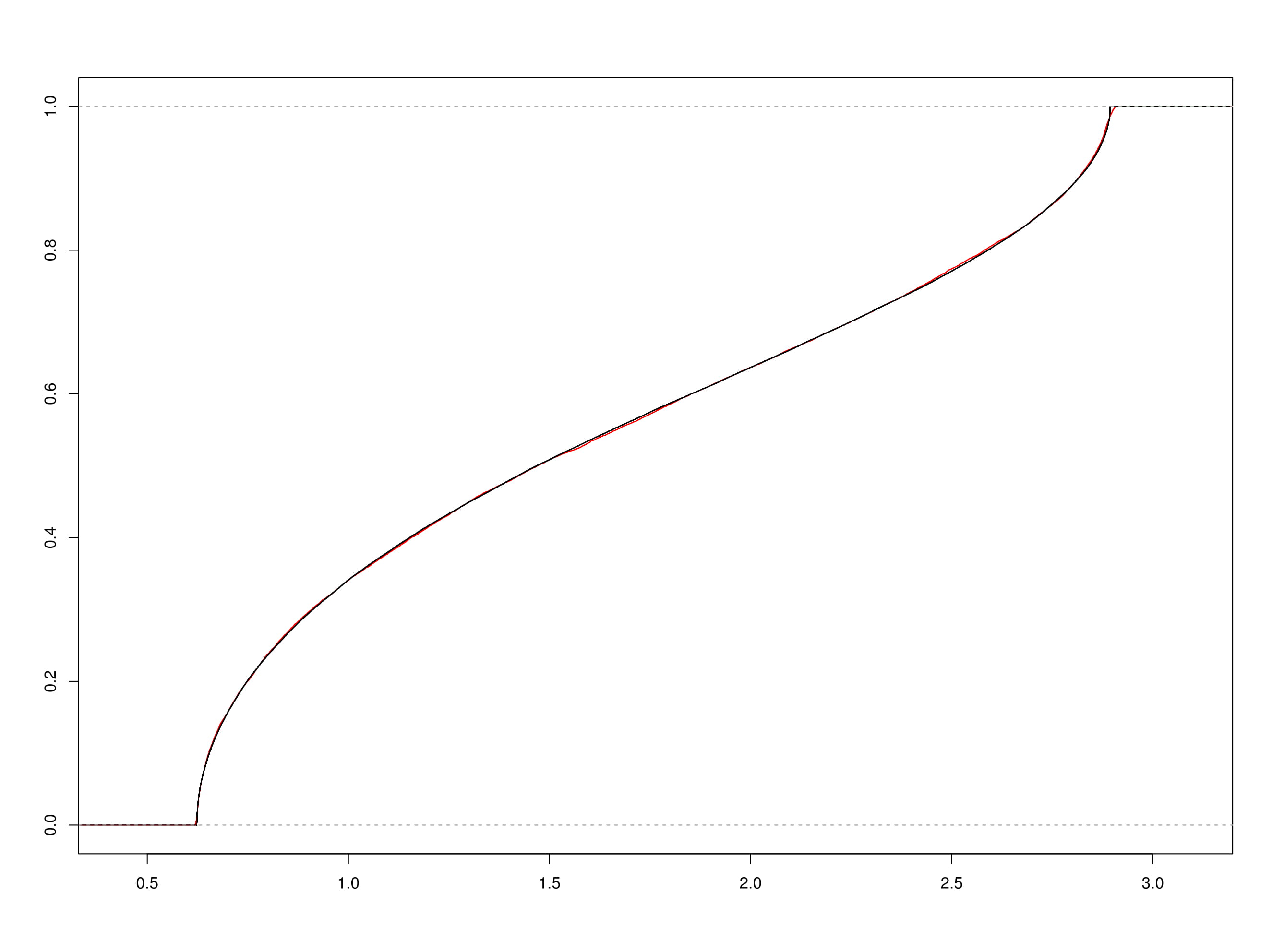}
\includegraphics[width=6.7cm]{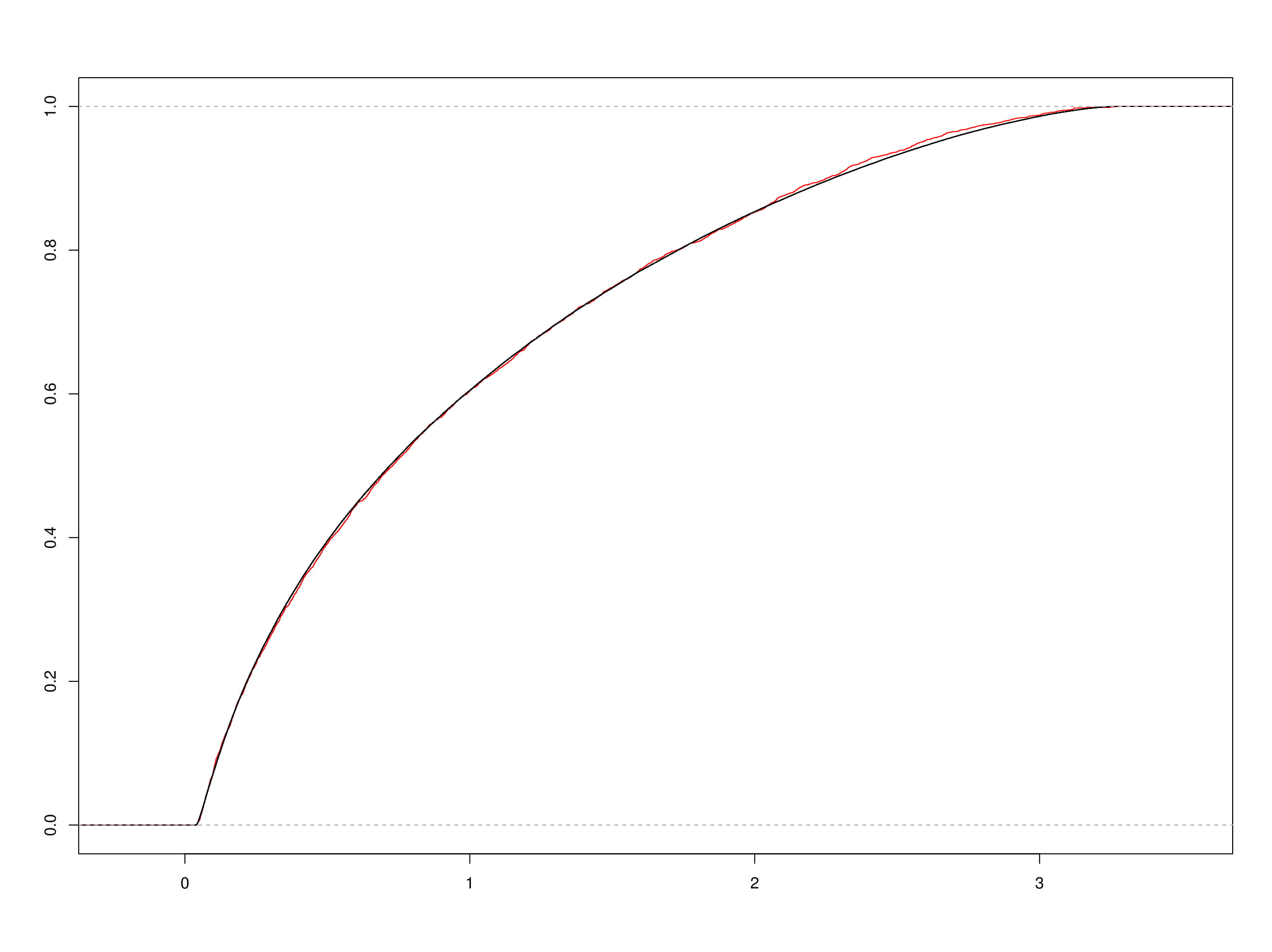} \\
\includegraphics[width = 6.7cm]{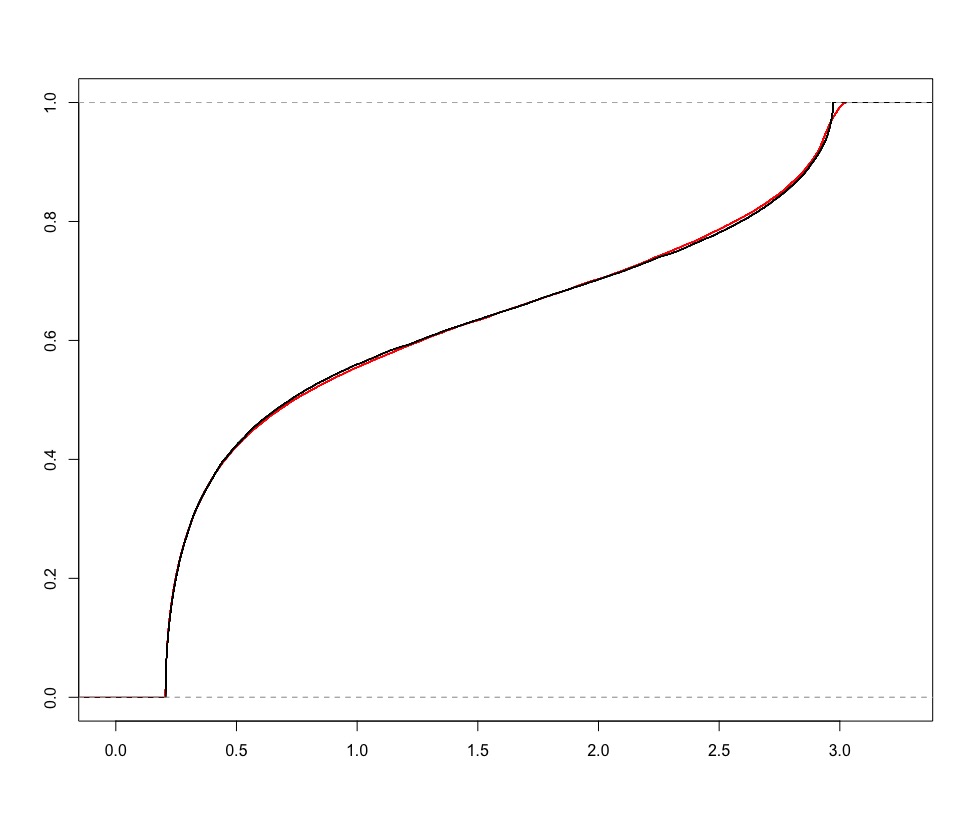}
\includegraphics[width=6.7cm]{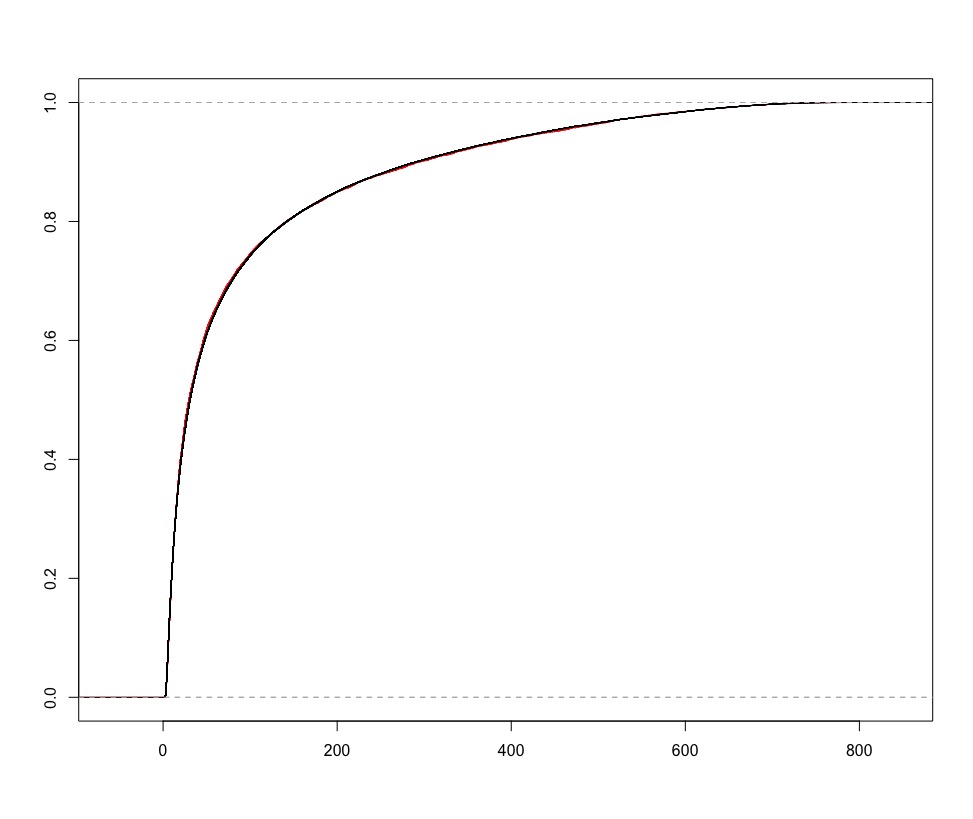}
\caption{Simulated margi6al cumulative distribution functions for the sampled $\hat x_1$ from the conditional samples for the cases of Table~\ref{table:1}. Using the Metropolis-Hastings algorithms of Section~\ref{twopar} (black); using the naive sampler of Section~\ref{naive} (red). Case 1: upper left. Case 2: upper right. Case 3: lower left. Case 4: lower right. } 
		\label{figure1}
\end{figure}

\section{Application to goodness-of-fit testing}
\label{sec6}
As noted in the introduction, a typical use of conditional samples given sufficient statistics is in goodness-of-fit testing.

Consider the null hypothesis $H_0$ that an observation vector $\bX$ comes from a particular distribution indexed by an unknown parameter $\theta$ and such that $\bT = T(\bX)$ is sufficient for~$\theta$. For a test statistic $W(\bX)$ we define the conditional $p$-value by 
\[
  p_{obs}^{W} =P_{H_0} (W(\bX) \ge w_{*} |  \bT=\bt)
\] 
where $w_{*}$ is the observed value of the test statistic and $\bt$ is the observed value of the sufficient statistic. A conditional goodness-of-fit test based on $W$ rejects $H_0$ at significance level $\alpha$ if $ p_{obs}^{W} \le \alpha$. 
Let now $\hat \bx_j$ for $j=1,2,\ldots,k$ be samples from the conditional distribution of $\bX$ given $\bT=\bt$. Then the observed $p$-values are approximated by
\beq
\label{obsp}
    p_{obs}^{W} \approx \frac{1}{k} \sum_{j=1}^k I(W(\bx_j) \ge w_{*}).
\eeq

Consider now data from \cite{dwo}, giving the precipitation from storms in inches at the Jug bridge in Maryland, USA. The observed data are
\begin{align*}&1.01, \ 1.11,\ 1.13,\ 1.15,\ 1.16,\ 1.17,\ 1.2,\ 1.52,\ 1.54,\ 1.54,\ 1.57,\ 1.64,\\
&1.73,\ 1.79,\ 2.09,\ 2.09,\ 2.57,\ 2.75,\ 2.93,\ 3.19,\ 3.54,\ 3.57,\ 5.11,\ 5.62
\end{align*}
comprising the data vector $\bx=(x_1,x_2,\ldots , x_n)$, where $n=24$. 
 The question is whether the gamma or inverse Gaussian distributions fit the data. Using the setup and notation from Section~\ref{twopar} we  calculate the sufficient statistics as 
$$t_1=\sum_{i=1}^n x_i = 52.72, \ \ \ t_2 = \sum_{i=1}^n \log x_i = 15.7815$$
for the gamma distribution and 
$$t_1 = \sum_{i=1}^n x_i = 52.72, \ \ \ t_2=\sum_{i=1}^n \frac1{x_i} = 13.8363$$
for the inverse Gaussian distribution.  
 
Some common test statistics for goodness-of-fit testing are constructed as follows. 
Let  $(x_{(1)}, x_{(2)}, \ldots , x_{(n)})$ be the order statistic of $\bx$. Then define 
the transformed values $z_i = F(x_{(i)} \ ; \  \hat \theta_1, \hat \theta_2)$, where $F(\cdot; \theta_1,\theta_2)$ is the cumulative distribution function of the gamma or inverse Gaussian distributions with parameters $\theta_1, \theta_2$, while  $\hat \theta_1, \hat \theta_2$ are the maximum likelihood estimates which can be found from the corresponding $t_1$ and $t_2$. 

From this setup we can write down the following test statistics: 
\begin{description}
\item[Kolmogorov-Smirnov test \citep{razali}] 
$$D = \max_{1\leq i\leq n}\left( z_i-\frac{i-1}{n}, \frac{i}n-z_i\right).$$
\item[Anderson-Darling test \citep{stephens}] 
$$A^2 = -n-\frac1n\sum_{i=1}^n(2i-1)\left(\ln z_{i}+\ln (1-z_{n-i+1})\right).$$
\item[The Cram\'er-von Mises test \citep{stephens}]
$$\omega^2 = \frac1{12n}+\sum_{i=1}^n\left(z_i-\frac{2j-1}{2n} \right)^2.$$ 
\end{description}
Now let $A^2_*$, $\omega_*$, $D_*$ denote the observed values of the test statistics as calculated from the observed data $\bx$. The approximated conditional $p$-values $p_{obs}^{D}$, $p_{obs}^{A^2}$, $p_{obs}^{\omega^2}$ can now be calculated from (\ref{obsp}) for each  the null hypotheses of gamma distribution and inverse Gaussian distribution, respectively.

We simulated $k=10^5$ samples from the conditional distributions and obtained the results of Table~\ref{goff}.
\begin{table}
	\begin{center}
		\begin{tabular}{||c || c c ||}  	
			\hline
			Test & Inverse Gaussian distribution & Gamma distribution \\ 
			\hline\hline
			$A^2$ & $0.094$ & $0.024$ \\ 
			\hline
			$\omega^2$ & $0.102$ & $0.031$ \\ 
			\hline
			$D$ & $0.217$ & $0.061$ \\ 
			\hline
		\end{tabular}
	\end{center}
	\caption{Conditional $p$-values}
	\label{goff}
\end{table}
The calculated conditional $p$-values indicate that the fit of the inverse Gaussian is marginal, which agrees with the results of \cite{dwo}. Using significance level $\alpha=0.05$, the tests based on $A^2$ and $\omega^2$ suggest that the gamma distribution does not fit the data. 

\section{Concluding remarks}
\label{sec7}
\subsection{Classical conditional Monte Carlo}
The method of the present paper can be seen as a reformulation of the main ideas of the classical concept of conditional Monte Carlo as introduced in the 1950s. The basic idea of conditional Monte Carlo was essentially the introduction of new coordinates.  \cite{tukey} made a point of the ``skullduggery'' related to such  arbitrary new variables which had ``nothing to do with the way our samples were drawn''. This ``trick'' was, however, the successful ingredient of the method, and is basically also the way our method works. The main new coordinate of our approach is represented by a parameter in an artificial statistical model. 

\subsection{Comparison to \cite{LT05}}
As indicated in the Introduction, there are some basic differences between the method of \cite{LT05} and the present approach. Still, the methods share several important ingredients, and we have therefore found it useful to adopt much of the notation from \cite{LT05} in the present paper. With this, both methods end up with the goal of calculating  conditional distributions of certain functions $\chi(\bU,\Theta)$ given related functions  $\tau(\bU,\Theta)=\bt$. While the role of $\Theta$ is apparently very similar in the two methods, there is indeed a difference. As shown in Section~\ref{thetat}, $\Theta$ can in the present approach basically be given any bounded distribution, but not an improper distribution. This is in contrast to  \cite{LT05}, where the distribution of $\Theta$ plays a role more in line with Bayesian and fiducial statistics. The framework and methods of \cite{LT05}, although tailored for the special situation of conditional sampling under sufficiency, in fact also induces interesting algorithms for calculation of Bayesian posterior distributions as well as fiducial sampling. 

\subsection{The roles of $\theta$ and $\chi(\bu,\theta)$}
As we have seen, the parameter $\theta$ will normally have the same dimension as the statistic $T(\bX)$. This ensures that the number of equations to solve for obtaining the $\hat \theta(\bu,\bt)$ is the same as the number of unknowns (see Assumption~1). In the examples of Sections~\ref{411} and \ref{412} we considered a one-dimensional $T(\bX)$, using the ``scaling'' transformation, $\chi(u,\theta)=u/\theta$. In Section~\ref{twopar} we conditioned on a two-dimensional statistic and used the transformation $\chi(u,\theta)=(u/\beta)^\alpha$ which is appropriate for positive variables. This transformation would not be appropriate, however, for models where the observations have support in all of $\RealN$. In this case, the linear transformation $(u-\alpha)/\beta$ could be used instead.  This would for example be a suitable transformation if, in the example of Section~\ref{412}, we conditioned on the average $\bar X$ in addition to the range $\max X_i - \min X_i$.

\subsection{Conditioning on $T(\bX)$ with dimension $k > 2$}
Our initial motivation for the paper came from the conditional sampling given sufficient statistics in two-parameter models like gamma and inverse Gaussian distributions. Still a natural question is, of course,  what to do if we want to condition on $T(\bX)$ with dimension $k > 2$. For the i.i.d.\ case with $X_i$ having support in all of $\RealN$, an obvious choice might be to let $\btheta=(\theta_0,\theta_1,\ldots,\theta_{k-1})$ and 
\beq
\label{poly}
   \chi(u,\btheta) =  \sum_{j=0}^{k-1} \theta_j u^j.
\eeq
If we put $k=2$ in (\ref{poly}), then this is in fact equivalent to the transformation $(u-\alpha)/\beta$ as suggested above. 

In the i.i.d.\ case with positive $X_i$, a general suggestion might be to use 
\beq
\label{nykji}
   \chi(u,\btheta) = \exp \left\{ \sum_{j=0}^{k-1} \theta_j u^j \right\}.
\eeq
For $k=2$ this transformation is in fact equivalent to the transformation used for the two-parameter exponential families of positive variables in Section~\ref{twopar}, since
\beq
\nonumber
    \left( \frac{u}{\beta}\right)^\alpha = \exp \{-\alpha \log \beta + \alpha \log u \}.
\eeq      
It follows from this that, in the gamma and inverse Gaussian cases treated in Section~\ref{twopar}, we could as well have used the transformation (\ref{nykji}) with $k=2$, and still obtained unique solutions for $\hat \btheta$.  In general, however, there might be several solutions for $\btheta$ in the equations $\tau(\bu,\btheta)=\bt$. There would therefore be a need for the possibility of relaxing Assumption 1 to allow more than one solution of the equation $\tau(\bu,\theta)=\bt$. We sketch an approach below. 

\subsection{Multiple solutions of the equation $\tau(\bu,\theta)=t$}
\label{multsol}
In general it might be difficult or impossible to find a suitable function $\chi(\bu,\theta)$ such that Assumption~1 holds. In practice, there may be a finite number of solutions, where the number may also depend on the values of $(\bu,\bt)$. Define then
\[
  \Gamma(\bu,\bt) = \{ \hat \theta : \tau(\bu,\hat \theta)=\bt\}.
\]
An extension of the arguments leading to (\ref{huttetu}), taking into account the multiplicity of the roots of the equation $\tau(\bu,\theta)=\bt$, then gives the following expression for the joint density of $(\bU,\Theta, \tau(\bU,\Theta))$,
\begin{equation}
\label{mult}
 h(\bu,\hat \theta, \bt) =  f(\bu|\hat \theta)
  \left| \frac{\pi(\theta)}{\det \partial_\theta \tau(\bu,\theta)} \right|_{\theta=\hat \theta}
 \end{equation}
 for $\bu,\bt$ as before, and  $\hat \theta \in \Gamma(\bu,\bt)$.  
 A similar expression was obtained in \cite{doksum}.
 
 The formula (\ref{efi}) for conditional expectations now becomes
 \[
   \E[\phi(\bX)|\bT=\bt] 
      = \frac{\int \sum_{\hat \theta \in \Gamma(\bu,\bt)}\phi(\chi(\bu,\hat \theta)) h(\bu,\hat\theta,\bt) d\bu }
      {\int \sum_{\hat \theta \in \Gamma(\bu,\bt)} h(\bu,\hat \theta, \bt) d\bu } ,
      \]
while the Metropolis-Hastings method of Section~\ref{213} may proceed as follows. First, propose the $\bu$ in the same way as in Section~\ref{213}, and then calculate the roots $\hat \theta \in \Gamma(\bu,\bt)$. One of these roots, say $\hat \theta'$, is then chosen at random according to the conditional distribution of $\hat \theta$ given $\bu$ and $\bt$, as found from (\ref{mult}). A properly modified version of the criterion  (\ref{mcmc}) is then used for acceptance or non-acceptance, using $h(\bu',\hat \theta',\bt)$ instead of  $\tilde h(\bu,\bt)$.

\subsection{Using the pivot $\tau(\bU,\theta)$ in statistical inference}
As noted in Section~\ref{sec2}, the random vector $\chi(\bU,\theta)$ and hence also $\tau(\bU,\theta)$  are pivots in the constructed artificial statistical model. In order to study their possible properties in a statistical inference setting, recall that for the gamma distribution case of Section~\ref{421}, we used $f_X(x)=e^{-x}$ and the transformation  (\ref{kjiexp}). In this case, (\ref{futh}) is in fact the joint density of $n$ i.i.d.\ Weibull-distributed random variables  with shape parameter $\alpha$ and scale parameter $\beta$. A curious biproduct of our method is therefore the construction of exact confidence sets for the pair $(\alpha,\beta)$ from observed i.i.d.\ Weibull-distributed data $\bu = (u_1,u_2,\ldots,u_n)$. The basis of the confidence sets would then be to sample vectors $\bx$ from the unit exponential distribution, calculate $t_1=\sum x_i$ and $t_2=\sum \log x_i$  and solve (\ref{gamma1})-(\ref{gamma2}) for $\alpha$ and $\beta$ with $\bu$ fixed at the observed Weibull-data. The resulting pairs $(\hat \alpha,\hat \beta)$ would then have a joint distribution corresponding to a two-dimensional confidence distribution for $(\alpha,\beta)$.  (This is in some sense exactly the opposite of what we are doing in Section~\ref{421}, where $t_1$ and $t_2$ are fixed, and we sample the $u_i$). For exact inference in Weibull models based on the maximum likelihood estimators for $(\alpha,\beta)$ we refer to \cite{weibull}.

\bibliography{mcbib}

\bibliographystyle{chicago}

\newpage
\section*{Appendix}
\begin{lemma}\label{unique1}
Let $n\in \mathbb{N}$ and $u_1,u_2,\ldots ,u_n\in \mathbb{R}^+$, and let for some $v_1,v_2,\ldots ,v_n\in \mathbb{R}^+$,
$$\sum_{i=1}^n v_i=t_1,$$
$$\sum_{i=1}^n \ln v_i=t_2.$$
Then the system of equations
\begin{equation}\nonumber
\left\{ 
\begin{array}{lcc}
\sum_{i=1}^n   \left( \frac{u_i}{\beta} \right)^\alpha & =& t_1 \\
\sum_{i=1}^n \ln  \left( \frac{u_i}{\beta} \right)^\alpha  & =& t_2, \end{array}
\right.
\end{equation}
has a unique solution for $\alpha, \beta \in \mathbb{R}^+$.
	\end{lemma}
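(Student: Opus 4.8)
The plan is to reduce the two-equation system to a single scalar equation in $\alpha$ and then show that this scalar equation has a unique root by a monotonicity argument. First I would eliminate $\beta$. From the second equation, $\alpha\sum_{i=1}^n \ln u_i - n\alpha \ln\beta = t_2$, so for any fixed $\alpha>0$ there is a unique $\beta=\beta(\alpha)>0$ solving it, namely $\ln\beta(\alpha) = \frac{1}{n}\sum_{i=1}^n \ln u_i - \frac{t_2}{n\alpha}$. Substituting this into the first equation yields a single equation $G(\alpha)=t_1$, where
\[
   G(\alpha) = \sum_{i=1}^n \left(\frac{u_i}{\beta(\alpha)}\right)^{\alpha}
   = e^{t_2/n}\sum_{i=1}^n \left(\frac{u_i}{(\prod_j u_j)^{1/n}}\right)^{\alpha}.
\]
Writing $c_i = u_i/(\prod_j u_j)^{1/n}$, so that $\prod_i c_i = 1$, the problem becomes: show $G(\alpha)=e^{t_2/n}\sum_i c_i^{\alpha}$ takes the value $t_1$ for exactly one $\alpha>0$.

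The key step is to analyze $\psi(\alpha) := \sum_{i=1}^n c_i^{\alpha}$. Since $\prod_i c_i = 1$ we have $\sum_i \ln c_i = 0$; hence $\psi'(\alpha) = \sum_i c_i^{\alpha}\ln c_i$, and $\psi'(0) = \sum_i \ln c_i = 0$. Moreover $\psi''(\alpha) = \sum_i c_i^{\alpha}(\ln c_i)^2 > 0$ (the $c_i$ are not all equal to $1$ in the nondegenerate case — if they were all $1$, then $t_2$ takes its extreme value $n\ln u_i$ forced, a boundary case that can be handled separately or excluded by strict convexity of the constraint set), so $\psi$ is strictly convex on $(0,\infty)$ with $\psi'(0)=0$. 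Therefore $\psi'$ is strictly increasing and positive on $(0,\infty)$, so $\psi$ is strictly increasing on $(0,\infty)$, with $\psi(0^+) = n$ and $\psi(\alpha)\to\infty$ as $\alpha\to\infty$. Consequently $G$ is strictly increasing and continuous on $(0,\infty)$, ranging over $(e^{t_2/n}\cdot n,\ \infty)$; strict monotonicity gives at most one solution of $G(\alpha)=t_1$.

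It remains to verify that $t_1$ lies in the range $(n e^{t_2/n},\infty)$, which is exactly where the hypothesis on the existence of $v_1,\ldots,v_n$ is used: by the AM–GM inequality applied to the $v_i$, $t_1 = \sum v_i \ge n(\prod v_i)^{1/n} = n\,e^{(\sum \ln v_i)/n} = n e^{t_2/n}$, with equality iff all $v_i$ are equal. So either $t_1 > n e^{t_2/n}$, giving a unique $\hat\alpha$ and then $\hat\beta=\beta(\hat\alpha)$; or $t_1 = n e^{t_2/n}$, the degenerate case in which all $v_i$ coincide and one checks directly that the unique solution is $\alpha$ determined by matching, with all $c_i=1$ handled as noted. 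I expect the main obstacle to be the bookkeeping around this degenerate boundary case and making precise the claim that the nondegenerate situation is the relevant one; the monotonicity heart of the argument via strict convexity of $\psi$ and $\psi'(0)=0$ is straightforward. \qed
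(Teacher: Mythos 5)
Your proposal is correct and follows essentially the same route as the paper: eliminate $\beta$ using the log-equation, reduce to the single scalar equation $\sum_i c_i^{\alpha}=t_1e^{-t_2/n}$ with $c_i=u_i/(\prod_j u_j)^{1/n}$, prove monotonicity of that sum via $\psi'(0^+)=0$ and $\psi''\ge 0$, and get existence from AM--GM applied to the $v_i$. Your extra attention to the degenerate case (all $c_i=1$, where $\psi$ is constant and uniqueness genuinely fails) is a point the paper's proof silently glosses over, but otherwise the two arguments coincide.
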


\begin{proof}
	We can transform the system into
	\[
	\left\{ 
	\begin{array}{lcc}\sum_{i=1}^n \left(\frac{u_i}{\beta}\right)^\alpha &=&t_1 \\ \\ \frac{\sum_{i=1}^n u_i^\alpha}{\left( \prod_{i=1}^n u_i^\alpha \right)^{1/n}}&=& \frac{t_1}{\exp(t_2/n)} \end{array} \right.
	\]
	If the  function 
	$$p(\alpha)=\frac{\sum_{i=1}^n u_i^\alpha}{\left( \prod_{i=1}^n u_i^\alpha \right)^{1/n}}$$
	is monotone, then there is a unique solution. The derivative is 
	\begin{align*}
	p'(\alpha)&=\left(\sum_{i=1}^n\left( \frac{ u_i}{\left( \prod_{i=1}^n u_i \right)^{1/n}}\right)^\alpha\right)' \\
	&=\sum_{i=1}^n \left( \frac{ u_i}{\left( \prod_{i=1}^n u_i \right)^{1/n}}\right)^\alpha \ln \frac{ u_i}{\left( \prod_{i=1}^n u_i \right)^{1/n}}.
	\end{align*}
	We note that $\lim_{\alpha\to0^+}p'(\alpha)=0$. The second derivative is 
	$$p''(\alpha)=\sum_{i=1}^n \left( \frac{ u_i}{\left( \prod_{i=1}^n u_i \right)^{1/n}}\right)^\alpha \ln^2 \frac{ u_i}{\left( \prod_{i=1}^n u_i \right)^{1/n}} \geq 0.$$
	Since the second derivative is positive, the first derivative is increasing. Hence we can conclude that the first derivative is always positive and $p$ is increasing. The solution exists, since 
	$$\lim_{\alpha\to 0^+}p(\alpha)=n$$
	and
	$$\frac{t_1}{\exp(t_2/n)}=\frac{\sum_{i=1}^n v_i}{\left( \prod_{i=1}^n v_i\right)^{1/n}} \geq n.$$
	The last inequality holds because the arithmetic mean is always larger than or equal to the geometric mean.
	\end{proof}

\begin{lemma}\label{unique2}
	Let $n\in \mathbb{N}$ and $u_1,u_2,\ldots ,u_n\in \mathbb{R}^+$, and let for some $v_1,v_2,\ldots ,v_n\in \mathbb{R}^+$,
	$$\sum_{i=1}^n v_i=t_1,$$
	$$\sum_{i=1}^n v_i^{-1}=t_2.$$
	Then the system of equations
	\begin{equation}\nonumber
	\left\{  \begin{array}{lcc} \sum_{i=1}^n  \left( \frac{u_i}{\beta} \right)^\alpha & =& t_1\\
\sum_{i=1}^n  \left( \frac{u_i}{\beta} \right)^{-\alpha}  = t_2,\end{array}  \right.
	\end{equation}
	has a unique solution for $\alpha, \beta \in \mathbb{R}^+$.
\end{lemma}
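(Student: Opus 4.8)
The plan is to mimic the structure of the proof of Lemma~\ref{unique1}: eliminate $\beta$, reduce the system to a single scalar equation in $\alpha$, and then establish existence and uniqueness of a root by a monotonicity argument. First I would introduce the notation $a = t_1$, $b = t_2$, and the scalars $S_+(\alpha) = \sum_{i=1}^n u_i^\alpha$, $S_-(\alpha) = \sum_{i=1}^n u_i^{-\alpha}$; note that for fixed $\alpha > 0$ the first equation reads $\beta^{-\alpha} S_+(\alpha) = t_1$ and the second $\beta^{\alpha} S_-(\alpha) = t_2$, so that $\beta$ is determined by the first equation as $\beta^{\alpha} = S_+(\alpha)/t_1$, and substituting into the second yields the single equation
\[
   q(\alpha) := \frac{S_+(\alpha)\, S_-(\alpha)}{n^2} = \frac{t_1 t_2}{n^2}.
\]
(The normalisation by $n^2$ is cosmetic; one may equally well work with $S_+(\alpha)S_-(\alpha) = t_1 t_2$.) So the task reduces to showing $q$ is strictly monotone on $(0,\infty)$ with the appropriate limiting behaviour, and that $t_1 t_2$ lies in the range.

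For the limiting behaviour, as $\alpha \to 0^+$ we have $S_+(\alpha) \to n$ and $S_-(\alpha) \to n$, so $q(\alpha) \to 1$; and by Cauchy--Schwarz applied to the vectors $(u_i^{\alpha/2})$ and $(u_i^{-\alpha/2})$ we get $n^2 = \big(\sum_i u_i^{\alpha/2}u_i^{-\alpha/2}\big)^2 \le S_+(\alpha)S_-(\alpha)$, with equality iff $u_i^\alpha$ is constant over $i$; hence $t_1 t_2 = \sum v_i \sum v_i^{-1} \ge n^2$ by the same Cauchy--Schwarz inequality (applied to $(v_i^{1/2})$, $(v_i^{-1/2})$), so the target value is at least $n^2$, i.e.\ $t_1 t_2/n^2 \ge 1 = q(0^+)$. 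This handles existence once monotonicity is in hand (if all $u_i$ are equal, one checks the degenerate case separately — then $q \equiv 1$ and a solution requires $t_1 t_2 = n^2$, which forces all $v_i$ equal; this edge case should be dispatched at the start).

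The main obstacle is the monotonicity of $q(\alpha) = S_+(\alpha)S_-(\alpha)$. Unlike the function $p$ in Lemma~\ref{unique1}, the product $S_+ S_-$ is not obviously log-convex by inspection, so I would compute $\frac{d}{d\alpha}\log q(\alpha) = \frac{S_+'}{S_+} + \frac{S_-'}{S_-}$ and show it is positive for $\alpha > 0$. Writing $L_i = \log u_i$, we have $S_+' = \sum_i u_i^\alpha L_i$ and $S_-' = -\sum_i u_i^{-\alpha} L_i$, so the claim becomes
\[
   \frac{\sum_i u_i^\alpha L_i}{\sum_i u_i^\alpha} \;\ge\; \frac{\sum_i u_i^{-\alpha} L_i}{\sum_i u_i^{-\alpha}},
\]
i.e.\ the weighted average of the $L_i$ under weights $\propto u_i^\alpha$ dominates that under weights $\propto u_i^{-\alpha}$. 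This is a correlation/rearrangement fact: the weights $u_i^\alpha$ are increasing in $L_i$ while $u_i^{-\alpha}$ are decreasing in $L_i$, so the first weighting puts more mass on larger $L_i$; this can be proved cleanly via Chebyshev's sum inequality or by expanding the difference of the two sides into $\sum_{i<j}(L_i - L_j)(\text{sign-definite factor})$ after clearing denominators. I would expect the sign to be \emph{strictly} positive unless all $u_i$ coincide, giving strict monotonicity of $q$ and hence uniqueness; the clearing-denominators computation is the one routine step I would not grind through here, but it parallels the pairwise-sum trick already used implicitly in the paper. Finally, recovering $\hat\beta$ from $\hat\alpha$ via $\hat\beta = (S_+(\hat\alpha)/t_1)^{1/\hat\alpha}$ and checking $\hat\beta \in \mathbb{R}^+$ is immediate.
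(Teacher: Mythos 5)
Your proposal is correct and follows essentially the same route as the paper: eliminate $\beta$ by reducing the system to the single equation $S_+(\alpha)S_-(\alpha)=t_1t_2$, prove existence from $\lim_{\alpha\to 0^+}S_+S_-=n^2\le t_1t_2$ (Cauchy--Schwarz on both sides), and prove uniqueness by monotonicity of $S_+S_-$. The one step you defer — positivity of $\tfrac{d}{d\alpha}\log(S_+S_-)$, i.e.\ $S_+'S_-+S_+S_-'\ge 0$ — is exactly what the paper verifies directly by writing $\tfrac{d}{d\alpha}(S_+S_-)=\sum_{i,j}(u_j/u_i)^\alpha\ln(u_j/u_i)$ and pairing the $(i,j)$ and $(j,i)$ terms, which is the same pairwise sign-definite decomposition your Chebyshev argument would produce after clearing denominators.
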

\begin{proof}
	We can transform the system into
	\[
	\left\{ \begin{array}{lcc}  \sum_{j=1}^n u_j^\alpha \sum_{i=1}^n u_i^{-\alpha} &= &t_1t_2 \\ \\
	\sum_{i=1}^n\left( \frac{u_i}{\beta}\right)^{-\alpha}&=& t_2 
	\end{array} \right.
	\]
	If the function 
	$$p(\alpha)=\sum_{j=1}^nu_j^\alpha \sum_{i=1}^n u_i^{-\alpha}$$
	is monotone, then there is a unique solution for $\alpha$.   In order to prove the monotonicity, let  $y_{ij}=\frac{u_j}{u_i}$, where $i,j=1,2,\ldots ,n$, $i\neq j$. The derivative is
	\begin{align}
	p'(\alpha)&= \left(\sum_{j=1}^n\sum_{i=1}^n \left( \frac{u_j}{u_i}\right)^\alpha\right)'\nonumber \\
	&=\left(\sum_{j=1}^n\sum_{i=1}^n y_{ij}^\alpha\right)'\nonumber \\
	&=\sum_{j=1}^n\sum_{i=1}^n y_{ij}^\alpha\ln y_{ij}\nonumber \\
	&=\sum_{i<j}\ln y_{ij}\left(y_{ij}^\alpha-y_{ij}^{-\alpha}\right).	\label{mono}
	\end{align}
	Now, if $y_{ij}> 1$, then $\ln y_{ij}>0$ and $y_{ij}^\alpha >y_{ij}^{-\alpha}$, which means that 
	$$\ln y_{ij}\left(y_{ij}^\alpha-y_{ij}^{-\alpha}\right)>0.$$
	If $y_{ij}< 1$, then $\ln y_{ij}<0$ and $y_{ij}^\alpha < y_{ij}^{-\alpha}$, which means that 
	$$\ln y_{ij}\left(y_{ij}^\alpha-y_{ij}^{-\alpha}\right)>0.$$
	Hence, we can conclude that \eqref{mono} is positive and the function $p$ is increasing. Since 
	$$\lim_{\alpha\to 0^+} p(\alpha)=n^2$$
	and
	$$t_1t_2 = \sum_{i=1}^n v_i\sum_{i=1}^n v_i^{-1}  \geq n^2$$
	the solution always exists.
\end{proof}

 \end{document}